\documentclass[aps,pra,reprint,a4paper,superscriptaddress,floatfix,longbibliography]{revtex4-2}
\usepackage[british]{babel}
\usepackage{graphicx}
\usepackage{adjustbox}
\usepackage{epsfig}
\usepackage{microtype}
\usepackage{amsfonts}
\usepackage{amsmath}
\usepackage{amssymb}
\usepackage{amsthm}
\usepackage{enumitem}
\usepackage{footmisc}
\usepackage{color}
\usepackage{tikz}
\usepackage{scalerel}
\usepackage[colorlinks=true,linkcolor=blue,citecolor=magenta,urlcolor=blue]{hyperref}
\usepackage{cleveref}
\usepackage{comment}
\usepackage{braket}
\usepackage{physics}
\usepackage{bm}
\usepackage{xfrac}
\usepackage{soul}

\theoremstyle{plain}
\newtheorem{theorem}{Theorem}
\newtheorem{proposition}[theorem]{Proposition}
\newtheorem{corollary}[theorem]{Corollary}

\theoremstyle{definition}

\theoremstyle{remark}
\newtheorem{Remark}[theorem]{Remark}
\theoremstyle{plain}


\newcommand{\ie}{i.e.\ }
\newcommand{\eg}{e.g.\ }
\newcommand{\Vorobev}{Vorob{\textquotesingle}ev}
\newcommand{\THb}{TH}


\newcommand{\possproblem}[1]{{\color{orange}[POSS PROBLEM]#1}}

\usepackage{soul}

\newcommand{\del}[1]{{\color{gray}\st{#1}}}
\newcommand{\hili}[1]{{\color{blue}#1}}

\usepackage{todonotes}
\newcommand{\rsbs}[1]{\todo[color=magenta]{#1}}
\newcommand{\rsbi}[1]{\todo[color=magenta,inline]{#1}}
\newcommand{\rsbshili}[1]{\todo[color=blue]{#1}}
\newcommand{\rsbihili}[1]{\todo[color=blue,inline]{#1}}

\newcommand{\rci}[1]{\todo[color=red,inline]{#1}}


\begin{document}

\title{Exclusivity principle, Ramsey theory, and $n$-cycle PR boxes}


\newcommand{\inllong}{INL -- International Iberian Nanotechnology Laboratory, Av.~Mestre Jos\'e Veiga s/n, 4715-330 Braga, Portugal}
\newcommand{\inlshort}{INL -- International Iberian Nanotechnology Laboratory, Braga, Portugal}
\newcommand{\haslablong}{HASLab, INESC TEC, Universidade do Minho, Departamento de Informática, Campus de Gualtar, 4710-057 Braga, Portugal}
\newcommand{\haslabshort}{HASLab, INESC TEC, Universidade do Minho, Braga, Portugal}
\newcommand{\diumlong}{Departamento de Inform\'atica, Universidade do Minho, Campus de Gualtar, 4710-057 Braga, Portugal}
\newcommand{\diumshort}{Departamento de Inform\'atica, Universidade do Minho, Braga, Portugal}

\author{Raman Choudhary}
\affiliation{\inlshort}
\affiliation{\haslabshort}
\affiliation{\diumshort}

\author{Rui Soares Barbosa}
\affiliation{\inlshort}


\begin{abstract}
The exclusivity principle (E-principle) states that the sum of probabilities of pairwise exclusive events cannot exceed 1. 
Since quantum theory satisfies this principle, any correlation that violates it is necessarily post-quantum.
Unlike other principles proposed to characterize quantum correlations, its intrinsically non-bipartite formulation enables its application in more general contextuality scenarios, where it can single out correlations among the set of non-disturbing ones.
Although equivalent to the no-signalling condition for any bipartite Bell scenario, this equivalence breaks down for two independent copies of the same scenario.
Such violation of the E-principle due to multiple copies, known as its \emph{activation effect}, was studied in [Nat Commun \textbf{4}, 2263 (2013)] for the nonlocal extremal boxes of $(2,m,2)$, $(2,2,d)$, and $(3,2,2)$ Bell scenarios.
The authors mapped the problem of exhibiting activation effects to finding certain cliques inside joint exclusivity graphs.
In this work, we refine the joint exclusivity structure to be an edge-colored exclusivity multigraph.
This allows us to draw a novel connection to Ramsey theory, which guarantees the existence of certain monochromatic subgraphs in sufficiently large edge-colored cliques, providing a powerful tool for ruling out E-principle violations.
We then exploit this connection, drawing on Ramsey-theoretic results to study violations of the E-principle by multiple copies of the contextual extremal boxes of $n$-cycle scenarios, called $n$-cycle PR boxes. For the usual ($n=4$) PR box we show that the known E-principle violation of $\sfrac{5}{4}$ is the maximal achievable using two copies, and that this same upper bound applies to two copies of the KCBS ($n = 5$) PR box.
We then prove that $n \geq 6$-cycle PR boxes, unlike the extremal boxes of the aforementioned Bell scenarios, do not exhibit activation effects with two or three copies.
Finally, for any number of independent copies $k$, we establish a lower bound on $n$ above which $n$-cycle PR boxes do not exhibit activation effects with $k$ copies.
\end{abstract}


\maketitle


\tableofcontents

\section{Introduction}
\label{sec:introduction}
\subsection{Context}
\label{ssec:context}
Quantum theory, unlike the theory of special relativity, lacks a description where simple physical principles guide its mathematical machinery.
Identifying such principles is one of the major goals in quantum foundations research,
pursued under programmes aiming at `reconstructions of quantum theory'.
Such efforts seek to enhance our understanding of quantum theory and diminish the mystery around its interpretations. 

Several principles have been proposed to characterize quantum correlations, such as
information causality, macroscopic locality, non-trivial communication complexity, or no advantage for non-local computation.
However, Gallego et al.~\cite{gallego2011quantum} challenged the effectiveness of these principles in singling out quantum correlations by providing examples of post-quantum correlations in tripartite Bell scenarios satisfying each of them. This revealed a key shortcoming in their bipartite formulations, highlighting the need for multipartite versions.

Subsequently, Fritz et al.~\cite{fritz2013local} introduced a new intrinsically multipartite principle, the Local Orthogonality principle, and explored its ability to detect post-quantum correlations in certain Bell scenarios. This principle also applies to Kochen--Specker (KS) contextuality scenarios more broadly, where it is commonly known as the exclusivity principle (or E-principle).
It states that the sum of probabilities of a set of pairwise exclusive events \footnote{An event here refers to the joint outcome of a set of compatible measurements. Two events are deemed exclusive when they assign different outcomes to a common measurement, \eg the events $[{+}{+}|AB]$ and $[{-}{-}|AC]$ differ on the common measurement $A$.}  must be less than or equal to $1$. 

The E-principle was used to witness post-quantumness of extremal correlations in the $(2,2,d)$, $(2,m,2)$, and $(3,2,2)$ Bell scenarios \cite{fritz2013local}.
The method made use of the \emph{activation effect} of the E-principle,
where a correlation may independently satisfy the E-principle but the joint consideration of multiple independent copies of that correlation `activates' a violation.
The simplest example is that of Popescu--Rohrlich (PR) boxes: while a single PR box satisfies the E-principle, joint events coming from two independent PR boxes violate it.
Since quantum correlations satisfy the E-principle and thus so do independent products of quantum correlations, this violation witnesses the post-quantumness of the PR box itself.
That is, the fact that two PR boxes violate the E-principle certifies that a single PR box is not quantum realizable.
Similarly, it turns out that for any non-local extremal correlation of the $(2,2,d)$, $(2,m,2)$, and $(3,2,2)$ Bell scenarios, two copies are enough to detect post-quantumness via the E-principle \cite{fritz2013local}.

A systematic approach to study such activation effects uses graph theory.
An exclusivity graph is built whose nodes represent possible events of a correlation and whose edges represent exclusivity between such events.
The activation setup comprising multiple independent correlations is then captured by constructing the joint exclusivity graph as the OR product of the individual exclusivity graphs.
A violation of the E-principle manifests as the existence of certain cliques inside this joint exclusivity graph.
Ref.~\cite{fritz2013local} exploited computational tools to find such cliques, but while that method worked well for relatively small graphs, in general the problem of finding maximal cliques is known to be NP-hard \cite{karp1975computational}.

Besides certifying post-quantumness of the aforementioned Bell non-local correlations, the E-principle has provided other interesting insights, which we briefly review.
Cabello~\cite{cabello2013simple} used the E-principle to derive the quantum maximum of the famous KCBS non-contextuality inequality. Cabello et al.~\cite{cabello2013basic} showed that the exclusivity graphs of non-contextuality inequalities violated by quantum theory necessarily contain certain induced sub-graphs, and further provided numerical evidence that the quantum maxima of the non-contextuality inequalities corresponding to those basic exclusivity graphs can be singled out by the E-principle. Amaral et al.~\cite{amaral2014exclusivity} showed that for self-complementary exclusivity graphs, the E-principle singles out the quantum set of correlations, known as the theta body of the graph
\footnote{Note that the theta body of an exclusivity graph contains all quantum correlations consistent with that graph and not just within a given scenario. That is, it contains quantum correlations coming from all possible scenarios that produce events realizing the given exclusivity graph.}.
This result was later generalized by Cabello~\cite{cabello2019quantum} to  arbitrary exclusivity graphs.
Furthermore, it is known that any theory that follows Specker's principle immediately satisfies the E-principle, including quantum and classical theories \cite{specker1990logik}. But the converse does not hold: the set of almost quantum correlations is inconsistent with Specker's principle~\cite{gonda2018almost} yet satisfies the E-principle~\cite{navascues2015almost}.  

\subsection{Contributions}
\label{ssec:contributions}
This work extends the study of the E-principle for detecting post-quantumness of correlations to the broader domain of KS contextuality scenarios, advancing the analyses from Ref.~\cite{fritz2013local} which were restricted to Bell scenarios.

Our first contribution is to establish a hitherto undiscovered connection with Ramsey theory \cite{graham1991ramsey,li2022elementary}, an (in)famous field within combinatorics. 
This provides a powerful theoretical framework for analyzing activation of the E-principle  without relying on computational methods.
We then apply this general link to study E-principle activation effects for the extremal contextual correlations of the fundamental class of $n$-cycle scenarios \cite{araujo2013all}. These extremal correlations are known as $n$-cycle PR boxes because of their structural similarity with the usual PR boxes.
The exploitation of known Ramsey theory results allows us to generate a plethora of results about the activation effects of $n$-cycle PR boxes without any further computational help.
This is especially relevant in cases where the size of violation-producing cliques shoots up exponentially (in the number of independent copies), hence rendering brute-force searches completely impractical.
We now provide some more details about both aspects of our contribution.

We set up the connection with Ramsey theory by introducing a refinement to the conventional analysis of activation effects, replacing the joint exclusivity graph by an edge-colored multigraph.
A unique color is assigned to the exclusivity graph of each independent correlation, allowing one to preserve the information about which individual copies are responsible for determining exclusivity between two joint events. By contrast, in the conventional approach, this detailed information is erased by the OR product.
The problem of finding violations of the E-principle then becomes that of finding edge-colored cliques inside this multigraph. Such edge-colored cliques are precisely the objects of study of Ramsey theory when applied to graphs.

Ramsey theory investigates an intriguing phenomenon about how order emerges from randomness: as combinatorial structures grow larger, certain patterns become inevitable, as expressed by the popular catchphrase `infinite chaos is impossible'.
When applied to graph theory, it studies the minimal size an edge-colored complete graph must be to force the appearance of specific monochromatic substructures regardless of the particular coloring of the edges.
A classic illustration involves a complete graph whose edges can be colored either red or green: once the graph contains at least six vertices, it must contain either a red triangle or a green triangle.

Applying results from Ramsey theory yields several significant results for $n$-cycle PR boxes:
\begin{itemize}
    \item  First, we show that two copies are enough to observe violation of the E-principle only for the cases $n=4$ and $n=5$, \ie for the CHSH and KCBS scenarios.
This identifies $n$-cycle PR boxes with $n \geq 6$ as the first known correlations that do not violate the E-principle using only two copies.
\item We then exploit one of the few results from the Ramsey theory literature that considers more than two colors to show that three copies are still insufficient to produce E-principle violations for $n \geq 6$-cycle scenarios.
\item 
Using another known Ramsey-theoretic result, we find a lower bound on the cycle size $n$ as a function of the number of copies $k$ such that the $k$ independent copies of an $n$-cycle PR box are not enough to violate the E-principle.
\item  We also use the canonical Ramsey theory example highlighted above to show that for the $4$-cycle and $5$-cycle PR boxes, the previously known violation of the E-principle is in fact the maximum violation that can be produced with two copies. This implies that, in these $n=4$ and $n=5$ cases, two copies can only violate the E-principle by the minimum conceivable amount.
\end{itemize}

\subsection{Structure of the paper}
\label{ssec:structure_of_paper}
In \Cref{sec:preliminaries} we introduce the mathematical background required to study activation effects of the E-principle.
\Cref{ssec:scenarios_and_correlations} introduces contextuality scenarios and correlations, and \Cref{ssec:egraphs} explains how an exclusivity graph can be drawn from a given correlation. \Cref{ssec:E_principle} then introduces the E-principle, while \Cref{ssec:joint_E_graph} explains how the problem of finding activations of the E-principle maps to finding cliques inside a joint exclusivity graph.

In \Cref{sec:refinement} we explain our refinement of the conventional method to study activation effects. 
\Cref{ssec:multi_color_or_product} explains how the joint exclusivity structure becomes an edge-colored multigraph.
\Cref{ssec:monochromatic_projection} defines its monochromatic projections, which are constrained by Ramsey theory results as explained in \Cref{ssec:Ramsey}.

In \Cref{sec:cycle_scenarios} we present the specific scenarios on which our results focus.
\Cref{ssec:n-cycle_scenario} introduces $n$-cycle scenarios, and then \Cref{ssec:n-cycle_PRboxes} presents their extremal contextual correlations, the $n$-cycle PR boxes.
\Cref{ssec:violation} gives the condition for violation of the E-principle by $k$ of independent copies of $n$-cycle PR boxes.

In \Cref{sec:result_highlights} we highlight the results of our analysis of activation effects of the E-principle for $n$-cycle PR boxes.
Sections~\ref{ssec:k_1}--\ref{ssec:k_4} respectively cover the cases where one, two, three, or more copies are considered.

In \Cref{sec:Proofs} we provide the proofs of our main results.
\Cref{ssec:E-graphs_and_properties} shows how the exclusivity graphs of $n$-cycle PR boxes look like and what relevant subgraphs they contain.
As a warm-up, \Cref{ssec:impossible} proves that a trivial clique exists within the joint exclusivity graph of $k$ copies of any $n$-cycle PR box only one node short of a violation, but that it cannot be extended to a violation-producing clique.
Then, \Cref{ssec:activation_eff,ssec:exploiting_ramsey_theory} contain the main results about activation effects for the $n$-cycle PR boxes, with
\Cref{ssec:exploiting_ramsey_theory} covering those that leverage known Ramsey-theoretic results.

Finally, in \Cref{sec:discussion}, we conclude our presentation. \Cref{ssec:conclusion} recaps the main contributions, focusing on the methodological aspects. \Cref{ssec:outlook} suggests and examine some possible directions for future follow-up research. 
\section{Preliminaries}
\label{sec:preliminaries}
This section covers the necessary background, introducing the E-principle and the methodology to study its activation effects.
\subsection{Measurement scenarios and correlations}
\label{ssec:scenarios_and_correlations}
Measurement scenarios abstract the notion of an experimental setup where certain measurements can be performed on a system but not all of them can be performed simultaneously.
A \emph{measurement scenario} is defined by specifying a set of measurements, a set of possible outcomes for each measurement, and a compatibility relation among the measurements.
A set of pairwise compatible measurements constitutes a context, representing measurements that can be jointly performed
\footnote{Note that we consider only compatibility structures determined by a binary relation or `compatibility graph': a set of measurements is compatible if it is pairwise so. This corresponds to Specker's principle, which in particular holds for quantum mechanics.}.
In quantum theory, measurements are represented as PVMs and compatibility corresponds to commutativity of the corresponding operators.
A more formal treatment and further details can be found \eg in \cite[Section II]{choudhary2024lifting}.

For a given measurement scenario, particular outcome statistics are specified by a \emph{correlation}:
a probability distribution on the joint measurement outcomes for each context.

\Cref{tab:correlation_table} shows
a correlation on the $(2,2,2)$ Bell scenario, also known as the CHSH scenario.
This is a Bell nonlocality scenario where two parties (Alice and Bob)
each have two alternative measurement setting ($A_0$ and $A_1$ for Alice, $B_0$ and $B_1$ for Bob),
each with two possible outcomes (${+}$ and ${-}$).
As a contextuality scenario, capturing the constraint that each party can perform only one of its measurements,
two distinct measurements are compatible if and only if they belong to different parties.
So, this is a four-measurement scenario with four maximal contexts: $\{A_i,B_j\}$ for $i,j \in \{0,1\}$.
A correlation is specified by a probability distribution on joint outcomes for each of these four contexts,
corresponding to the rows of the table.
The specific correlation shown in \Cref{tab:correlation_table} is known as the Popescu--Rohrlich (PR) box \cite{popescu1994quantum},
a well-studied super-quantum correlation that maximally violates the CHSH inequality.
It is an extremal correlation of the $(2,2,2)$ Bell scenario, \ie a vertex of its polytope of no-signalling correlations.

\begin{table}[h]
    \centering
    \begin{adjustbox}{max width=\textwidth}
    \begin{tabular}{|c|c|c|c|c|}
        \hline
         & $++$ & $+-$ & $-+$ & $--$ \\
         \hline
        $A_0B_0$ & $\sfrac{1}{2}$ & 0 & 0 & $\sfrac{1}{2}$ \\ \hline
        $A_0B_1$ & 0 & $\sfrac{1}{2}$ & $\sfrac{1}{2}$ & 0 \\ \hline
        $A_1B_0$ & $\sfrac{1}{2}$ & 0 & 0 & $\sfrac{1}{2}$ \\ \hline
        $A_1B_1$ & $\sfrac{1}{2}$ & 0 & 0 & $\sfrac{1}{2}$ \\ \hline
    \end{tabular}
    \end{adjustbox}
    \caption{Correlation table for the PR box}
    \label{tab:correlation_table}
\end{table}

\subsection{Exclusivity graphs}
\label{ssec:egraphs}

Given a correlation, one constructs its \emph{exclusivity graph} of possible events.
The nodes of this graph represent joint outcome events over maximal contexts that occur with non-zero probability,
\eg the event $[{+}{+} | A_0B_0]$ for the PR box.
Two vertices are connected by an edge whenever the two events are \emph{exclusive},
meaning that there is a common measurement to which they assign different outcomes;
for example, the events $[{+}{+} | A_0B_0]$ and $[{-}{+} | A_0B_1]$ are exclusive
since they differ on the common measurement $A_0$.
\Cref{fig:E_PR} depicts the exclusivity graph of the PR box.
One may think of exclusivity graphs of correlations as being vertex-weighted: the weight of a node is the probability assigned by the correlation to the event it represents.
In the case of the PR box, all nodes are assigned the same weight, $\sfrac{1}{2}$.
Such uniform weighting is typical of the class of exclusivity graphs studied in this text,
namely those arising from extremal correlations in cycle scenarios ($n$-cycle PR boxes) and their products; see \Cref{sec:cycle_scenarios}.
Consequently, we mostly focus on the structure of the underlying unweighted graphs.

\subsection{The exclusivity principle}
\label{ssec:E_principle}

The \emph{exclusivity principle} (or \emph{E-principle}) states that the sum of probabilities of a set of mutually exclusive events must be less than or equal to $1$.

To test this principle on a given correlation, it suffices to consider its (vertex-weighted) exclusivity graph.
Since edges in this graph represent exclusivity, its cliques (fully connected subsets of nodes) correspond to sets of mutually exclusive events, each determining an E-principle inequality.
This renders cliques crucial structures for studying the exclusivity principle using graph theory.
Verifying whether a correlation satisfies the exclusivity principle then amounts to checking the inequalities corresponding to all the maximal cliques of its exclusivity graph.
This quickly becomes intractable, since it requires enumerating all the maximal cliques of a graph,
which is NP-hard 
\footnote{More precisely, the problem of deciding whether a graph has a clique of a given size is NP-complete \cite{karp1975computational}. Listing all maximal cliques clearly allows one to solve this decision problem. Moreover, since there are graphs with exponentially many maximal cliques, in the worst case their enumeration requires exponential time.}.
Still, various software packages are available which can be used to find cliques within modestly sized graphs.


\begin{figure}
    \centering
    \includegraphics[width=0.6\linewidth]{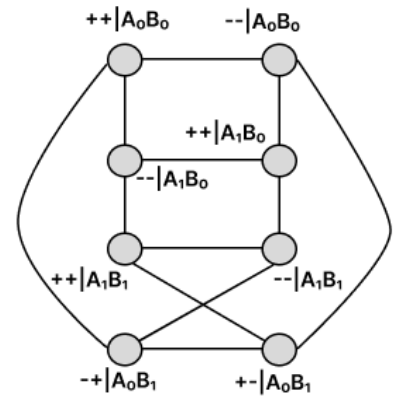}
    \caption{Exclusivity graph of the PR box}
    \label{fig:E_PR}
\end{figure}

Violations of the E-principle exhibit what are known as \emph{activation effects}, where correlations that individually satisfy the E-principle may violate it when combined.
In particular, there exist correlations that satisfy the E-principle on their own, yet lead to a violation when multiple independent copies are considered together.
The PR box provides a concrete example: while a single PR box satisfies the E-principle,
two independent PR boxes (\ie the correlation obtained as the product of two PR boxes) violate it.

In graph-theoretic terms, the violation is witnessed by a clique in the joint exclusivity graph of the two (or more) independent correlations, even though the individual exclusivity graphs admit no violation-witnessing cliques.

\subsection{Joint exclusivity graph of independent boxes}
\label{ssec:joint_E_graph}
To study activation effects, we must consider the product of correlations that captures their independent parallel composition.
Here the word `independent' evokes an analogy with independent random variables in probability theory, which are jointly distributed according to the product measure.
The situation we have in mind is that of the two (or more) measurement scenarios being run in parallel and producing outcomes independently of each other.

To keep the analysis at the level of exclusivity graphs, we construct the joint exclusivity graph from the exclusivity graphs of the individual correlations.
This involves specifying how independent events combine into joint events, and defining the exclusivity relation between joint events from the individual ones, as well as the probabilistic weights assigned to joint events.
In short, a joint event is represented by a tuple comprising one event drawn from each individual graph, and two joint events are deemed exclusive if their component events are exclusive in at least one of the individual graphs.
The probabilistic independence of the correlations is reflected in that the probability of a joint event is the product of the probabilities of its constituent events.

This idea is formally captured through the \emph{OR product}, also known as the \emph{co-normal product}, of graphs.
Let $G_i$ be the exclusivity graph of the $i$th correlation for $i \in \{1,\ldots,k\}$.
The joint exclusivity graph of these $k$  independent correlations will be denoted $J^k(G_1,\ldots,G_k)$, or $J^k$ for short leaving the graphs implicit. It is defined as follows.
The vertex set of $J^k$ is 
\begin{equation}
    V(J^k) =  V(G_1) \times \cdots \times V(G_k),
\end{equation}
where `$\times$' denotes the Cartesian product.
That is, a node of $J^k$ is a $k$-tuple $u=(u_1, \ldots ,u_k)$ where $u_i \in V(G_i)$, \ie each $u_i$ is a node in $G_i$.
The edge set of $J^k$ is defined as
\begin{equation}
    E(J^k)= \{(u,v) \mid   \exists\,  i \in \{1, \ldots, k\}.\; (u_i,v_i) \in E(G_i)\}.
\end{equation}
Finally, we define the vertex-weighting on $J^k$: owing to the independence of the correlations,
the probability $P(u)$ of a joint event $u = (u_1,\ldots,u_k)$ is given by the product
\begin{equation}
P(u) = P(u_1) \cdots P(u_k).
\label{Eqn:2}
\end{equation}

Note that the OR product of graphs does not retain information about which individual graph(s) gives rise to each edge in the product graph.
As a result, $J^k$ fails to encode which of the $k$ original exclusivity graphs determine exclusivity between a given pair of joint events.

\section{Refining the joint exclusivity graph}
\label{sec:refinement}
We introduce a fine-grained version of the joint exclusivity graph, using colors to track the source of exclusivity between pairs of joint events.
We then show how each monochromatic subgraph inside this new structure relates to the corresponding exclusivity graph in $\{G_1, \ldots,G_k\}$, and we explain how Ramsey theory constrains these monochromatic subgraphs, and so the original exclusivity graphs as well.

\subsection{Joint multigraph of independent boxes}
\label{ssec:multi_color_or_product}
We improve upon the construction of the joint exclusivity graph $J^k$ to encode which individual graphs contribute to exclusivity between joint events.
To this end, we introduce a refined construction, the \emph{multicolor product} of graphs, producing a joint exclusivity structure $\Gamma^k$ that is an edge-colored multigraph.
The key idea is that we assign a unique color $i$ to the edges arising from each original graph $G_i$.
A pair of nodes may thus be connected by multiple edges of different colors, with each color appearing at most once and indicating that the two joint events are exclusive due to the corresponding component.
\Cref{fig:graph_product} illustrates this construction with a toy example.

Formally, given exclusivity graphs $G_1, \ldots, G_k$, the joint exclusivity multigraph is given by their multicolor product
$\Gamma^k = \Gamma^k(G_1,\ldots,G_k)$, defined as follows.
The vertex set $V(\Gamma^k)$ is exactly the same as that of $J^k$:
\begin{equation}
    V(\Gamma^k) = V(G_1) \times \cdots \times V(G_k).
\end{equation}
The crucial difference lies in the edge structure.
Between any two nodes $u,v \in V(\Gamma^k)$, representing joint events, there is a separate edge for each individual graph $G_i$ which has an edge between the corresponding components $u_i$ and $v_i$.
Hence, the edge set $E(\Gamma^k)$ is partitioned into color classes:
 \begin{equation}
    E(\Gamma^k) = \bigcup_{i=1}^{k} E_{i}(\Gamma^k) 
\end{equation}
where each $E_{i}(\Gamma^k)$ is the set of edges with color $i$, which are derived from the $i$th individual graph $G_i$:
\begin{equation}
E_{i}(\Gamma^k) = \{(u,v)_{i} \mid (u_i,v_i) \in E(G_i)\}.
\end{equation}
Here, $(u,v)_i$ denotes an edge of color $i$ between vertices $u$ and $v$.
The relationship to the usual joint exclusivity graph $J^k$ is given by
\begin{equation}
    E(J^k) = \{(u,v) \mid \exists i \in \{1, \ldots, k\}.\, (u,v)_i \in E(\Gamma^k)\}.
\end{equation}
Of course, the vertex-weighting indicating probability of joint events remains as for $J^k$, given by \Cref{Eqn:2}.

\begin{figure}
    \centering
    \includegraphics[width=1\linewidth]{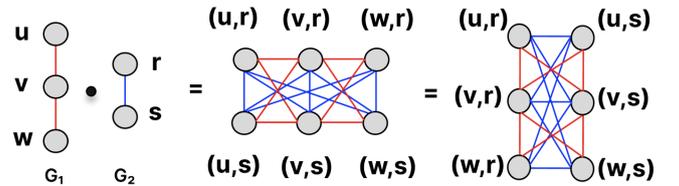}
    \caption{An example illustrating multicolor product of two graphs $G_1$ and $G_2$. The multigraph obtained can be thought to have multiple layers of nodes where each layer is isomorphic to $G_1$ (resp.\ $G_2$) with one of the colors, while the graph obtained by picking one node from each layer and the edges of the other color is necessarily isomorphic to $G_2$ (resp.\ $G_1$)  
    as shown in the middle (resp.\ rightmost) representation above.}
    \label{fig:graph_product}
\end{figure}

To test the E-principle on $\Gamma^k$, one seeks sets of nodes such that every pair of nodes in the set is connected by at least one colored edge.
Such a set of nodes forms an edge-colored clique in $\Gamma_k$, with the possibility of more than one edge (of different colors) between each pair of nodes. 
The corresponding events are pairwise exclusive, hence form a clique in $J^k$.
Thus, as before, a violation of the E-principle occurs when the sum of probabilities of the events in such an (edge-colored) clique exceeds $1$.

We remark that edge-colored multigraphs have been previously employed in a similar setting but with a different intention in Refs.~\cite{rabelo2014multigraph,vandre2022quantum,porto2024quantum},
where nodes represent events contributing to a given Bell inequality and edges denote exclusivity, with the color indicating the party from which such exclusivity arises.

\subsection{Monochromatic projection of $\Gamma^k$}
\label{ssec:monochromatic_projection}
\begin{figure}
    \centering
    \includegraphics[width=1\linewidth]{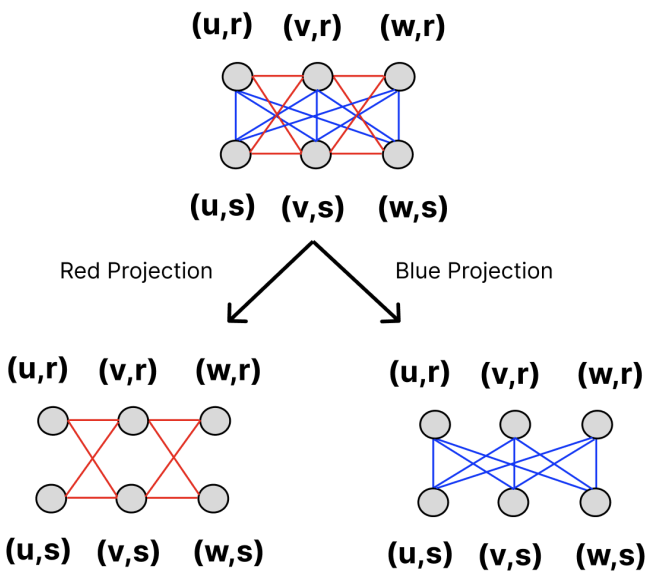}
    \caption{The joint exclusivity multigraph constructed in \Cref{fig:graph_product} and its monochromatic projections.}
    \label{fig:monochromatic_projection}
\end{figure}

In \Cref{ssec:Ramsey}, we will see how Ramsey theory necessitates the existence of certain monochromatic subgraphs inside sufficiently large cliques of $\Gamma^k$.
To facilitate that analysis and to draw meaningful conclusions from it, it is crucial to formally define the monochromatic projections of $\Gamma^k$ and to study the connections between such projections and the individual exclusivity graphs $G_i$.

For each color $i \in \{1, \ldots, k\}$,
the \emph{monochromatic projection} of color $i$ is defined as $H_i(\Gamma^k) = (V(\Gamma^k),E_{i}(\Gamma^k))$,
\ie $H_i(\Gamma^k)$ is the subgraph of $\Gamma^k$ with the same set of nodes but containing only the edges of color $i$.
Thus, it tracks only the exclusivity relations between joint events that are derived from the exclusivity graph $G_i$ of the $i$th correlation.
\Cref{fig:monochromatic_projection} illustrates the two monochromatic projections of the toy example multigraph from \Cref{fig:graph_product}.

To follow the proofs of our results,
it is imperative to highlight the precise relationship between the monochromatic projections $H_{i}(\Gamma^k)$ and the original exclusivity graphs $G_{i}$.

The first observation is that one can partition the nodes of $\Gamma^k$, and hence of $H_i(\Gamma^k)$,
according to their $i$th component, \ie by the equivalence relation whereby $u, u' \in V(\Gamma^k)$ are deemed equivalent if and only if $u_i = u'_i$.
For each node $v \in V(G_i)$, write
\begin{equation}
    V_v^i = \{u \in V(\Gamma^k) \mid u_i = v \}
\end{equation}
for the set of joint events (nodes of $\Gamma^k$) whose $i$th component is the fixed event $v$.
In other words, 
$V_v^i = V(G_1) \times \cdots \times V(G_{i-1}) \times \{v\} \times V(G_{i+1}) \times \cdots \times V(G_k)$.
These sets partition $V(\Gamma^k)$:
\begin{equation}
    V(\Gamma^k) = \bigcup_{v \in V(G_i)} V_v^i.
\end{equation}
Moreover, all the nodes in each of these sets are neighborhood-equivalent in $H_i(\Gamma^k)$, the monochromatic projection of color $i$:
if $(v,w) \in E(G_i)$ then every node in the class $V_v^i$ is connected to every node in the class $V_w^i$ by an edge of color $i$ in $\Gamma^k$, while if $(v,w) \notin E(G_i)$ then no nodes of $V_v^i$ are connected to any nodes of $V_w^i$ by an edge of color $i$ in $\Gamma^k$.
In the example from \Cref{fig:monochromatic_projection},
the equivalence classes with respect to the red graph $G_1$
are
$V_u^1 = \{(u,r), (u,s)\}$, 
$V_v^1 = \{(v,r),(v,s)\}$, 
and
$V_w^1 = \{(w,r),(w,s)\}$,
while the equivalence classes with respect to the blue graph $G_2$ are 
$V_r^2 = \{(u,r),(v,r),(w,r)\}$
and
$V_s^2 = \{(u,s),(v,s),(w,s)\}$.

The following proposition is an immediate consequence of the structure just observed.
\begin{proposition}
\label{Prop1}
    The monochromatic projection $H_i(\Gamma^k)$ contains a subgraph isomorphic to $G_i$.
\end{proposition}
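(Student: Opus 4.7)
The plan is to exhibit an explicit embedding of $G_i$ into $H_i(\Gamma^k)$ by selecting exactly one representative from each equivalence class $V_v^i$. Concretely, I would fix, once and for all, an arbitrary tuple $(u_1, \ldots, u_{i-1}, u_{i+1}, \ldots, u_k) \in V(G_1) \times \cdots \times V(G_{i-1}) \times V(G_{i+1}) \times \cdots \times V(G_k)$, and define the map $\varphi \colon V(G_i) \to V(\Gamma^k)$ by
\begin{equation}
    \varphi(v) = (u_1, \ldots, u_{i-1}, v, u_{i+1}, \ldots, u_k).
\end{equation}
This $\varphi$ is manifestly injective, since different choices of the $i$th coordinate yield different tuples, and $\varphi(v) \in V_v^i$ for each $v \in V(G_i)$.

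The next step is to verify that $\varphi$ is a graph homomorphism whose image carries exactly the edges of $G_i$ transported through $\varphi$. This is immediate from the observation established just before the proposition: if $(v,w) \in E(G_i)$, then the pair $(\varphi(v), \varphi(w))$ is connected by a color-$i$ edge in $\Gamma^k$ since their $i$th components $v$ and $w$ are adjacent in $G_i$; conversely, if $(v,w) \notin E(G_i)$, then by definition of $E_i(\Gamma^k)$ the pair $(\varphi(v), \varphi(w))$ carries no color-$i$ edge. Restricting to $H_i(\Gamma^k)$, which by definition retains only the color-$i$ edges, this shows that $\varphi$ induces an isomorphism between $G_i$ and the subgraph of $H_i(\Gamma^k)$ induced by $\varphi(V(G_i))$.

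There is no real obstacle here, since all the necessary structural content has been made explicit in the paragraphs preceding the proposition. In fact, a slightly stronger statement is also immediate from the same argument: for every choice of $(u_j)_{j \neq i}$, the induced subgraph of $H_i(\Gamma^k)$ on the corresponding selection is isomorphic to $G_i$, so $H_i(\Gamma^k)$ contains $\prod_{j \neq i} |V(G_j)|$ disjoint copies of $G_i$ (one per way of fixing the remaining coordinates). For the proposition itself it suffices to exhibit a single such copy, so I would keep the proof short and restrict to one fixed choice of the ambient coordinates.
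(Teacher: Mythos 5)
Your proof is correct and follows essentially the same route as the paper: both select one representative per equivalence class $V_v^i$ and observe that adjacency of the $i$th components is equivalent to the presence of a colour-$i$ edge, so the induced subgraph is isomorphic to $G_i$. The only cosmetic difference is that you fix the same ambient coordinates $(u_j)_{j\neq i}$ for every representative, whereas the paper allows an arbitrary representative from each class; both choices work equally well.
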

\begin{proof}
    For each $v \in V(G_i)$, pick a node $\tilde{v} \in V(\Gamma^k)$ from the corresponding equivalence class $V_v^i$, \ie such that $\tilde{v}_i = v$.
    The induced subgraph of $H_i(\Gamma^k)$ determined by these nodes is isomorphic to $G_i$ since
    $(\tilde{v},\tilde{w})_i \in E(\Gamma^k)$ if and only if $(v,w) \in E(G_i)$.
\end{proof}


The central observation that underlies most of our results is expressed by the following proposition relating monochromatic odd cycles in $\Gamma^k$ with odd cycles in the corresponding original graph.

\begin{proposition}
    \label{Prop2}
    For each odd cycle in $H_i(\Gamma^k)$, there is an odd cycle in $G_i$ of at most the same length.
\end{proposition}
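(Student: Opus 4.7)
The plan is to exploit the ``blow-up'' structure of $H_i(\Gamma^k)$ recalled just before \Cref{Prop1}: its vertex set is partitioned into classes $\{V_v^i\}_{v \in V(G_i)}$, and an edge of color $i$ connects two vertices of $V(\Gamma^k)$ if and only if their $i$th components are distinct and adjacent in $G_i$. Consequently, the projection map $\pi_i \colon V(\Gamma^k) \to V(G_i)$ sending a joint event to its $i$th component is a graph homomorphism from $H_i(\Gamma^k)$ to $G_i$ that never collapses an edge to a loop.

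Given an odd cycle $u^{(1)} u^{(2)} \cdots u^{(\ell)} u^{(1)}$ of length $\ell$ in $H_i(\Gamma^k)$, I apply $\pi_i$ vertexwise to obtain a sequence $v^{(j)} := u^{(j)}_i$ in $V(G_i)$. Each edge $(u^{(j)}, u^{(j+1)})$ being of color $i$ forces $(v^{(j)}, v^{(j+1)}) \in E(G_i)$, and in particular $v^{(j)} \neq v^{(j+1)}$. Closing the sequence with $v^{(\ell+1)} = v^{(1)}$, we obtain a genuine closed walk of length $\ell$ in $G_i$.

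The result then follows from the classical graph-theoretic lemma that any closed walk of odd length $\ell$ in a simple graph contains an odd cycle of length at most $\ell$. This is proved by induction on $\ell$: either the walk visits pairwise distinct vertices and is thus itself an odd cycle, or some vertex repeats and decomposes the walk into two strictly shorter closed walks whose lengths sum to $\ell$; since $\ell$ is odd, at least one of these subwalks has odd length, and the inductive hypothesis applied to it yields the desired odd cycle. The argument is essentially routine, and I do not foresee a real obstacle. The only point requiring care is the verification that the projected sequence is a genuine closed walk (\ie that consecutive entries differ), which is guaranteed by the loop-free nature of $G_i$ together with the blow-up structure of $H_i(\Gamma^k)$; once this is in place, the walk-to-cycle lemma closes the argument immediately.
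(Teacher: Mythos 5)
Your proof is correct, but it takes a genuinely different route from the paper's. The paper works entirely inside $H_i(\Gamma^k)$: given an odd cycle $C$ containing two vertices $v,v'$ in the same equivalence class $V_v^i$, it uses the blow-up structure (namely that $v'$ is adjacent to every neighbour of $v$) to find a chord splitting $C$ into an even and an odd piece, iterates this surgery until all vertices of the cycle lie in distinct classes, and only then projects to $G_i$. You instead project immediately: the map $\pi_i$ sending a joint event to its $i$th component is a homomorphism $H_i(\Gamma^k)\to G_i$ that collapses no edge to a loop (since $E_i(\Gamma^k)$ is defined by $(u_i,v_i)\in E(G_i)$ and $G_i$ is simple), so an odd cycle upstairs becomes an odd closed walk downstairs, and the standard walk-to-cycle lemma finishes the job. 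Both arguments are sound and both ultimately rest on the same parity observation (an odd closed structure split at a repetition yields at least one odd piece), but yours is more modular and more general: it needs only the loop-free homomorphism property, not the full neighbourhood-equivalence of the blow-up classes, and it cleanly quarantines the combinatorics into a textbook lemma. The paper's version has the minor advantage of exhibiting the shrinking explicitly within $H_i(\Gamma^k)$, which is the form in which the argument is re-invoked in the proof of \Cref{corollary1}; but your statement serves that corollary equally well, since it directly shows the shortest odd cycle in $H_i(\Gamma^k)$ is no shorter than that of $G_i$.
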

\begin{proof}
Let $C$ be an odd cycle in $H_i(\Gamma^k)$.
If all the nodes in $C$ belong to different equivalence classes, then their $i$th components all differ and thus form a cycle in $G_i$.
Otherwise, let $v$ and $v'$ be two nodes in the same equivalence class, \ie such that $v_i=v'_i$.
We show that, in that case, there exists a strictly smaller odd cycle in $H_i(\Gamma^k)$.

\Cref{fig:even_odd_cycle}(c) illustrates the basic idea of the proof.
Since nodes in the same equivalence class cannot be adjacent in $H_i(\Gamma^k)$, $v$ and $v'$ cannot be consecutive nodes in the cycle $C$.
Let $u$ and $w$ be the two neighboring nodes of $v$ in the cycle $C$. Being in the same equivalence class as $v$, the node $v'$ must also be connected to both $u$ and $w$.
Consider two paths starting from $v'$ and going in opposite directions along the cycle $C$: the path from $v'$ to $u$ and the path $v'$ to $w$, both avoiding the node $v$ (see \Cref{fig:even_odd_cycle}(a)).
Since $C$ is an odd cycle, then one of these paths must necessarily have even length. Adjoining to it the edge $(u,v')$ or the edge $(w,v')$, we obtain an odd cycle.
This cycle has strictly smaller length than $C$, for it is formed by a proper subset of its nodes having dropped at least the node $v$.

This process can be repeated until one reaches an odd cycle each of whose nodes belongs to a different equivalence class.
The fact that the length strictly decreases at each step guarantees termination, with the limit case being a cycle of length $3$, where each node must belong to a different equivalence class.
\end{proof}

\begin{figure}
    \centering
    \includegraphics[width=1\linewidth]{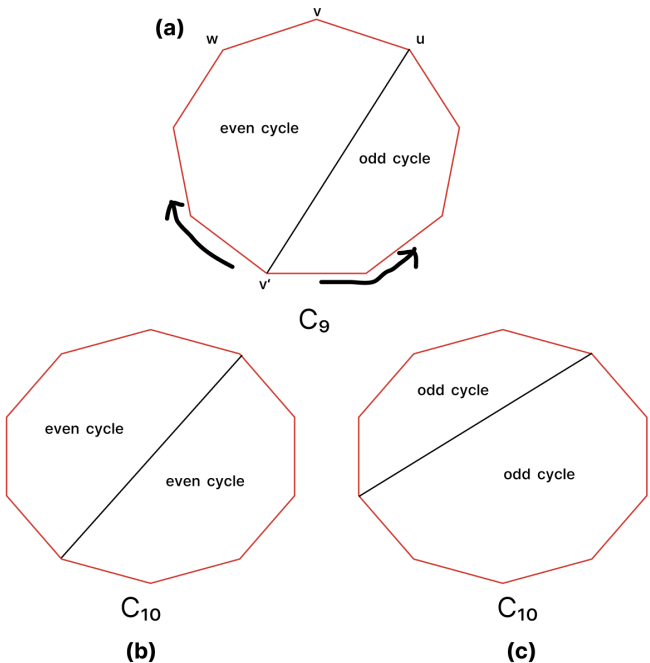}
    \caption{An edge between non-adjacent nodes in a cycle divides it into two smaller cycles. If the original cycle is odd, a chord necessarily leads to (a) an even and an odd cycle. If the original cycle is even, a chord can divide the cycle into (b) two even cycles or (c) two odd cycles. }
    \label{fig:even_odd_cycle}
\end{figure}

Note that a similar proof strategy fails for even cycles
because drawing an edge between two nodes in an even cycle leads to both partitions having even length or both having odd length, as shown respectively in \Cref{fig:even_odd_cycle}(b) and \Cref{fig:even_odd_cycle}(c).
By contrast, in the case of an odd cycle, one partition is guaranteed to have odd length, as shown in \Cref{fig:even_odd_cycle}(a), guaranteeing that the iterative process eventually reaches an odd cycle each of whose nodes belongs to a unique equivalence class, which in turn implies the existence of a cycle of the same length in $G_i$.
\begin{corollary}
\label{corollary:smallestoddcycle}
    \label{cor:smallest_odd_cycle_monochrome}
    The length of the smallest odd cycle in $H_i(\Gamma^k)$ is equal to the length of the smallest odd cycle in $G_i$.
\end{corollary} 
\begin{proof}
By \Cref{Prop2}, the length of the smallest odd cycle in $G_i$ is at most that of the smallest odd cycle of $H_i(G)$.
On the other hand, by \Cref{Prop1},  $H_i(\Gamma^k)$ contains a cycle that is isomorphic to the smallest odd cycle of $G_i$.
\end{proof}

Notice that in establishing the above relationships between $H_i(\Gamma^k)$ and $G_{i}$ we made no assumptions about the exclusivity graphs $G_1, \ldots, G_k$. Thus, the results hold in full generality, for any correlation in any measurement scenario.
Later, we make use of this fully general connection to explore the specific case of $n$-cycle PR boxes.
We now explain how violations of the E-principle (or lack thereof) for a joint exclusivity graph can be studied through the lens of Ramsey theory.

\subsection{Ramsey-theoretic constraints}
\label{ssec:Ramsey}
Ramsey theory is a branch of combinatorics that studies the emergence of regularity in arbitrary partitions of (large enough) combinatorial structures.
A typical question considers a structure that is cut into a set number of pieces and asks how large the structure must be to guarantee that, regardless of the partition, at least one of the pieces contains a certain pattern.

The prototypical Ramsey-theoretic problem problem -- which, along with minor variations, underlies our results --
can be formulated in the language of graph theory.
It considers colorings of the edges of a complete graph and asks for the minimal size that guarantees the presence of specific monochromatic substructures.
More precisely, given $k$ graphs $S_1, \ldots, S_k$, where each $S_i$ is associated with a color $i \in \{1, \ldots, k\}$, the Ramsey number $R(S_1, \ldots, S_k)$ denotes the minimum size of a complete graph such that any $k$ edge coloring contains a monochromatic $S_i$ in color $i$ for some $i$.
A simple example is $R(K_3,K_3)=6$: as already mentioned, any way of coloring the edges of $K_6$ must produce a monochromatic triangle, but the same is not true for $K_5$.
A more involved example is $R(K_3,K_3,K_4) = 30$ \cite{codish2016computing}: if three colors are available, then $K_{30}$ is the smallest complete graph such that any edge coloring contains a $K_3$ of color $1$ or $2$ or a $K_4$ of color $3$.
Reading it the other way round, once we know a certain Ramsey number $R(S_1,\ldots,S_k)$, we can conclude that for any $k$ edge-colored clique of size at least $R(S_1,\ldots,S_k)$ there is at least one color $i \in \{1,\ldots,k\}$ whose monochromatic projection contains $S_i$.

Since a multigraph allows for more edges than a graph with the same nodes, Ramsey-theoretic constraints $(S_1,\ldots,S_k)$ that apply to any $k$ edge coloring of a complete graph also apply to a $k$ edge coloring of a multigraph of the same size. Therefore, for our analysis, we do not necessarily need new Ramsey-theoretic results specific for multigraphs in order to squeeze out conclusions applicable to multigraphs like $\Gamma^k$.
In fact, the Ramsey theory of multigraphs is not sufficiently developed in the literature.
Our work is therefore based only on Ramsey-theoretic results established for ordinary graphs.

In our analysis, this theoretical framework becomes powerful because it constrains the possible structures within joint exclusivity graphs, allowing one to rule out violations of the exclusivity principle.
As established earlier in \Cref{ssec:multi_color_or_product}, a violation of the E-principle corresponds to certain cliques inside the edge-colored multigraph $\Gamma^k$.
Ramsey-theoretic results force monochromatic substructures to be present inside the cliques being sought for violation.
That is, the existence of violation-producing cliques within $\Gamma^k$ implies certain constraints on the monochromatic  projections $H_i(\Gamma^k)$.
Guaranteeing the absence of the required substructures within the corresponding projections
immediately precludes the existence of any violation-producing clique in $\Gamma^k$, ruling out the possibility of violating the E-principle.
This forms the logical basis of our results and also establishes a general connection between Ramsey theory and activation effects of the E-principle.
Concretely, in our proofs, we use a few Ramsey-theoretic results that necessitate the presence of odd cycles in the monochromatic projections $H_i(\Gamma^k)$. \Cref{Prop2} then implies a direct constraint on the original exclusivity graph $G_i$, which we can explicitly rule out.

Notably, while this approach is powerful for proving impossibility results, establishing the existence of E-principle violations through Ramsey theory is more challenging.
To prove impossibility, it suffices to show inconsistency between the monochromatic projections and the sought-after clique with just one known Ramsey-theoretic result. On the other hand, merely being consistent with Ramsey-theoretic constraints does not in itself guarantee the existence of a violation.

\section{Cycle scenarios}
\label{sec:cycle_scenarios}
We now introduce $n$-cycle scenarios and their contextual extremal boxes, termed $n$-cycle PR boxes.
Our results later explore activation effects for these correlations, mostly in light of Ramsey theory.  

\subsection{The $n$-cycle scenarios}
\label{ssec:n-cycle_scenario}
\begin{figure}
    \centering
    \includegraphics[width=1\linewidth]{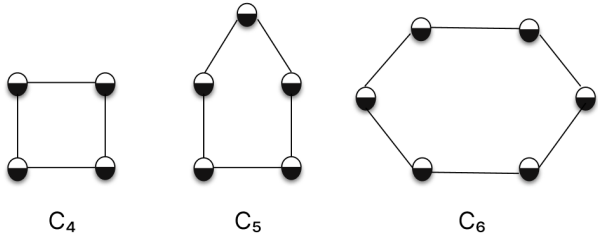}
    \caption{Compatibility graphs of the $n$-cycle scenarios for $n = 4,5,6$. Each bi-colored node represents a dichotomic measurement and each edge denotes joint measurability of connected measurements. $C_4$ corresponds to the CHSH scenario while $C_5$ is the KCBS scenario.}
    \label{fig:n-cycle}
\end{figure}

A cycle scenario is a scenario whose measurement compatibility structure is described by a cycle graph.
Cycle scenarios are arguably the most fundamental Kochen--Specker contextuality scenarios.
\Vorobev’s theorem \cite{vorob1962consistent} guarantees that any contextuality-witnessing measurement scenario must contain an $n$-cycle (for some $n\geq 4$) as an induced sub-scenario.

We focus on cycle scenarios in which all measurements are dichotomic, i.e. have two possible outcomes, which we label $+$ and $-$.
The $n$-cycle scenario has $n$ measurements $A_0, \ldots, A_{n-1}$,
with only consecutive measurements (modulo $n$) being compatible, so that the compatibility graph is the $n$-cycle graph $C_n$, see \Cref{fig:n-cycle}.
Hence, the maximal contexts are the two-measurement sets of the form $\{A_i,A_{i \oplus 1}\}$, where $\oplus$ denotes addition modulo $n$.
The simplest non-trivial (\ie contextuality-witnessing) example is that of the CHSH scenario, which corresponds to the case $n=4$.

Thanks to the simple structure of these (dichotomic) $n$-cycle scenarios, their sets of classical, quantum, and non-disturbing correlations are well understood.
Ara\'ujo et al.~\cite{araujo2013all} characterized their no-disturbance polytopes by providing all their extremal correlations (vertices),
characterized their classical (\ie noncontextual) polytopes by providing all their facet-defining noncontextuality inequalities,
and it further supplemented these inequalities with their corresponding maximal quantum violations
\footnote{Note that the classical polytope of a scenario is naturally given in V-representation: its vertices are the deterministic noncontextual models, corresponding to global assignments of outcomes to all the measurements. What is hard is thus to find its facets, \ie the noncontextuality inequalities. By contrast, the polytope of non-disturbing correlations is naturally given in H-representation: it is defined by affine equations expressing no-disturbance and normalization together with positivity inequalities. 
What is typically hard is thus to find its vertices, \ie the extremal correlations, particularly the nonclassical ones (the vertices of the noncontextual polytope are also vertices of the no-disturbance polytope).}.

\subsection{The $n$-cycle PR boxes}
\label{ssec:n-cycle_PRboxes}
For the purpose of this work, we are interested in the non-classical vertices of the no-disturbance polytopes of $n$-cycle scenarios.  These turn out to be a natural generalization of the PR box correlation from the CHSH scenario ($n = 4$), and we therefore refer to them as $n$-cycle PR boxes.
The $n$-cycle PR boxes are described as follows
(recall that a correlation consists of a probability distribution over the joint outcomes for each context):
\begin{enumerate}

    \item For each context $\{A_i,A_{i\oplus1}\}$, the joint outcomes $(a_i,a_{i\oplus1})$ in the support of the respective probability distribution satisfy either $a_i \cdot a_{i \oplus 1 } = 1$  or $a_i \cdot a_{i \oplus 1 } = -1$, where $a_i$ denotes the outcome of measurement $A_i$.
    In other words, for any given context, the only permissible joint outcomes are either the correlated pairs $\{++,--\}$ or the anti-correlated pairs $\{+-,-+\}$.

    \item For each context, the probability distribution is uniform: each of the two permissible joint outcomes occurs with the same probability, $\sfrac{1}{2}$
    \footnote{Such uniform probabilities are actually forced upon us  by no-disturbance once the support of the distributions is fixed (by the other two items). The fact that the possibilistic support fixes the values of the probabilities uniquely is actually a feature of any extremal correlation in any scenario \cite{abramsky2016possibilities}.}.
    
    \item The choices between correlation and anti-correlation for each context obey a global constraint:
    the number of contexts exhibiting anti-correlated outcomes ($a_i \cdot a_{i\oplus 1}  = -1$) is odd.
    Consequently,
    when $n$ is even, an odd number of contexts exhibit correlated outcomes, 
    while when $n$ is odd, an even number of contexts exhibit correlated outcomes.
\end{enumerate}

\begin{figure}
    \centering
    \includegraphics[width=1\linewidth]{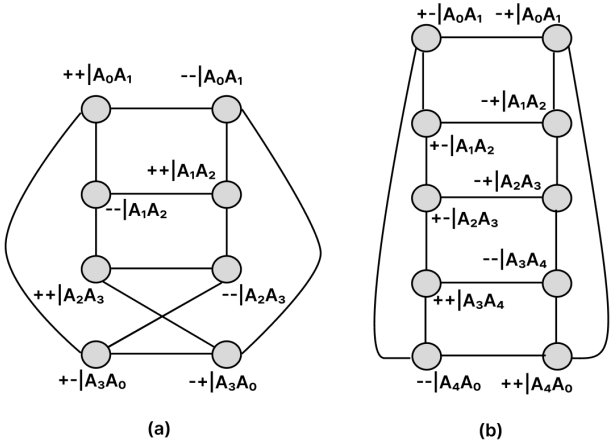}
    \caption{Simplest exclusivity graphs of: (a) n = 4-cycle PR box, the graph is isomorphic to the Möbius ladder graph $M_8$ (b) n = 5-cycle PR box, the graph is isomorphic to the Prism graph $Y_5$. An arbitrary $n$-cycle PR box is a generalized form of (a) or (b) above depending on whether $n$ is even or odd respectively.}
    \label{fig:4_5_cycle_PR_box}
\end{figure}

For a fixed $n$, all $n$-cycle PR boxes are equivalent under permutation of outcomes of individual measurements, and hence share the same exclusivity graph.
As we show in \Cref{theorem1} ahead, these exclusivity graphs follow two distinct patterns depending on the parity of $n$:
\begin{itemize}
    \item for even $n$, the exclusivity graph is the M\"obius ladder graph $M_{2n}$ shown in \Cref{fig:E-graph}(a), generalizing the $n=4$ CHSH case (\Cref{fig:4_5_cycle_PR_box}(a)).

    \item for odd $n$, the exclusivity graph is the circular ladder graph, also known as the Prism graph, $Y_n$ shown in \Cref{fig:E-graph}(b), generalizing the $n=5$ KCBS case (\Cref{fig:4_5_cycle_PR_box}(b))
\end{itemize}

\subsection{Violation condition}
\label{ssec:violation}

In the case of (products of) $n$-cycle PR boxes, the problem of E-principle violation admits an especially simple description.
This is due to the fact that in any $n$-cycle PR box each possible event occurs with the same probability, $\sfrac{1}{2}$.
Consequently, when dealing with $k$ independent copies of such boxes, by \cref{Eqn:2} each event in the joint exclusivity graph occurs with probability $\sfrac{1}{2^k}$.

Given a clique of size $l$ in such a joint exclusivity graph, the condition for it to violate the E-principle then becomes:
\begin{equation}
    \frac{l}{2^k} > 1 .
\end{equation}
Therefore, the existence of a clique of size $2^k + 1$ inside the joint exclusivity graph of $k$ independent $n$-cycle PR boxes guarantees violation of the E-principle.

\section{Result highlights}
\label{sec:result_highlights}
So far, we have established a connection between Ramsey theory and activation effects of the E-principle by refining the conventional method for studying the latter.
This refinement consists in introducing edge colors in the joint exclusivity graph of independent correlations to track which components contribute to the exclusivity between any pair of joint events.
Witnessing violations of the E-principle then amounts to finding cliques within the resulting edge-colored multigraph where the probabilities of the corresponding events sum to more than $1$.
Ramsey theory forces the existence of certain monochromatic structures within sufficiently large edge-colored cliques, and can thus be leveraged to study activation effects.

We now exploit the power of this general connection in the specific setting of $n$-cycle PR boxes.
In this section, we highlight the key results and the methods used to obtain them, leaving detailed proofs for \Cref{sec:Proofs}.

We fix the number of copies $k$ for which we explore activation of the E-principle by $n$-cycle PR boxes.
Recall that for $k$ independent copies of an $n$-cycle PR box, the condition for violating the E-principle is the existence of a $K_{2^k+1}$ within the edge-colored joint exclusivity multigraph $\Gamma^k$.
In \Cref{ssec:impossible} we show that one can easily construct cliques of size $2^k$ within $\Gamma^k$, but that these specific cliques can never be extended to one of size $2^k + 1$, underlining the fact that finding a $K_{2^k + 1}$ is not a straightforward task.

\Cref{tab:results_table} summmarizes the results we obtained relating $k$ and $n$, which we detail in the following subsections.
Note that a violation of the E-principle with $k$ copies guarantees a violation with $k+1$ copies, as highlighted in
\Cref{Remark1}. 

\subsection{$k = 1$}
\label{ssec:k_1}
For every $n$-cycle scenario, as for any bipartite Bell scenario, it is known that all the extremal vertices of the no-disturbance polytope satisfy the E-principle (with a single copy).
Since each possible event in an $n$-cycle PR box occurs with probability $\sfrac{1}{2}$, violating the E-principle would require the existence of a $K_3$ within the exclusivity graph of these extremal correlations.
\Cref{Corollary1} highlights the impossibility of such a $K_3$.
Therefore, activation effects become relevant for these scenarios.

\subsection{$k = 2$}
\label{ssec:k_2}
We show in \Cref{theorem3} that, except for the cases $n =4$ and $n=5$, no other $n$-cycle PR box violates the E-principle with two copies.

Recall that a $K_5$ within the joint exclusivity graph $\Gamma^2$ is required to exhibit a violation of the E-principle with two copies.
The proof of this theorem 
exploits the fact that the edges of a $K_5$ graph can be divided into two disjoint copies of $C_5$, \ie two cycles of length $5$; see \Cref{fig:K5_bifurcation}.
For $n=4$ or $n=5$, each of these $5$-cycles can be realized independently within the exclusivity graph of a single $n$-cycle PR box.
However, it is not possible to find a $C_5$ within the exclusivity graph of an $n$-cycle PR box when $n\geq 6$, as shown in \Cref{Corollary2}.
Our proof goes further to show that there is no other way to come up with a $K_5$ within $\Gamma^2$ for $n \geq 6$.

$K_5$ turns out to be a small enough clique that it can be explored analytically on a case-by-case basis to reach our results about $k=2$ for any $n$.
Still, a known Ramsey-theoretic result highlights an interesting fact about the $n =4,5$ cases, while another less well-known Ramsey result corroborates the result we obtained for $n \geq 6$, providing a neat alternative perspective.

The first of these is the classical Ramsey theory example $R(C_3,C_3) = 6$.
It implies that if a $K_6$ exists within $\Gamma^2$, one of the monochromatic projections -- and so, by \Cref{corollary:smallestoddcycle}, also the exclusivity graph of a single $n$-cycle PR box -- must contain a $C_3$ cycle.
In turn, \Cref{Corollary1} tells us that this is impossible for any $n$-cycle scenario.
This means that two copies of $n = 4,5$-cycle PR boxes can only violate the E-principle minimally, \ie with a $K_5$.
This aligns with the computational findings in Ref.~\cite{fritz2013local}, where
the search for E-principle violations in the case $k=2$ and $n=4$ (two copies of the CHSH PR box) turned up only cliques of size $5$.
This Ramsey-theoretic result explains why. 

To corroborate our result for two copies of $n\geq 6$-cycle PR boxes one can invoke a result by Erd\"os et al.~\cite{erdos1976generalized}, from one of the first papers that studied Ramsey numbers for more than two colors.
In this work, the authors define a new type of Ramsey number, $R (\leq [C])$, better explained in the proof of \Cref{theorem4} ahead.
A particular instance is $R (\leq (C_5,C_3)) = 5$;
it means that any two edge-colored $K_5$ must contain a monochromatic odd cycle of length $3$ or $5$.
\Cref{Corollary1,Corollary2} inform us that this is impossible for $n\geq 6$-cycle PR boxes, corroborating the result we obtain in \Cref{theorem3}.
This makes $n \geq 6$-cycle boxes the first reported instances in the E-principle literature of correlations that do not violate the E-principle with only two copies.

\begin{table}
    \begin{tabular}{|c|c c c c c c c c c c c c c|}
    \hline
    \multicolumn{1}{|c|}{k} & \multicolumn{13}{|c|}{n} \\
    \hline
        1 & \textcolor{red}{4} & \textcolor{red}{5} & \textcolor{red}{6} & \textcolor{red}{7} & \textcolor{red}{8} & \textcolor{red}{\dots} &  &  &
         &  &
        &  & \\
        \hline
        2 & \textcolor{blue}{4} & \textcolor{blue}{5} & \textcolor{red}{6} & \textcolor{red}{7} & \textcolor{red}{8} & \textcolor{red}{\dots}  &  &  &  &  &
         &  &  \\
        \hline
        3 & \textcolor{blue}{4} & \textcolor{blue}{5} & \textcolor{red}{6} & \textcolor{red}{7} & \textcolor{red}{8} & \textcolor{red}{\dots} &  &  &
         &  &
         &  &  \\
        \hline
        4 & \textcolor{blue}{4} & \textcolor{blue}{5} & \textcolor{gray}{6} & \textcolor{gray}{7} & \textcolor{gray}{8} & \textcolor{gray}{\dots} & \textcolor{gray}{17} & \textcolor{red}{18} & \textcolor{red}{19} & \textcolor{red}{\dots} &
         &  & \\
        \hline
        l ($\geq 5$) & \textcolor{blue}{4} & \textcolor{blue}{5} & \textcolor{gray}{6} & \textcolor{gray}{7} & \textcolor{gray}{8} & \textcolor{gray}{\dots} & \textcolor{gray}{17} & \textcolor{gray}{18} & \textcolor{gray}{19} & \textcolor{gray}{\dots}  & \textcolor{gray}{$2^l$ + 1} & \textcolor{red}{$2^l$ + 2} & \textcolor{red}{\dots} \\
        \hline
    \end{tabular}
    \caption{Relationship between $k$ and $n$ for the activation effects of E-principle with $n$-cycle PR boxes. For a given $k$ number of independent copies on the left: on the right, red marks that the $n$-cycle PR box does not violate the E-principle, blue  indicates that there is a violation, while gray indicates no knowledge of violation hence an open problem for future work.
    Results known prior to this work were only those in the first column ($n=4$).}
    \label{tab:results_table}
\end{table}

\subsection{$k = 3$}
\label{ssec:k_3}

In \Cref{theorem4}, we show that, for $n \geq 6$, $n$-cycle PR boxes do not violate the E-principle with three copies.
On the other hand, for $n=4,5$, the known two-copy violation implies a three-copy one, as per \Cref{Remark1}. 
This completely characterizes the case $k=3$ of three copies of $n$-cycle PR boxes.

Recall that a $K_9$ within $\Gamma^3$ is required to exhibit a violation of the E-principle with three copies.
In contrast with the previous $k=2$ setting, a case-by-case analysis becomes impractical.
Our proof relies on a Ramsey-theoretic result to rule out the existence of a $K_9$ within $\Gamma^3$.
A result from Ref.~\cite{erdos1976generalized} asserts that $R(\leq (C_5,C_5,C_5)) = 9$, which means that any three edge-coloring of $K_9$ contains a monochromatic odd cycle of length $3$ or $5$.
A $K_9$ within $\Gamma_3$ would then imply the existence of an odd cycle of length $3$ or $5$ in at least one of its monochromatic projections, and consequently, by \cref{corollary:smallestoddcycle} on the exclusivity graph of an individual $n$-cycle PR box. This, in turn, is impossible for $n \geq 6$ due to \Cref{Corollary1,Corollary2}.

\subsection{$k \geq 4$}
\label{ssec:k_4}
Finally, we obtain a general result applicable to an arbitrary number of copies $k$, bounding the size $n$ of $n$-cycle PR boxes that may witness a $k$-copy activation effect.
Specifically, in \Cref{theorem5} we show that for $n \geq 2^k + 2$, $n$-cycle PR boxes do not violate the E-principle with $k$ copies.
Since we have already fully characterized the cases $k = 2,3$, this result becomes relevant for $k \geq 4$.
This leaves open the range $6 \leq n \leq 2^k + 1$, for which we do not know whether $k$ copies suffice to violate the E-principle.

Recall that a $K_{2^k + 1}$ within $\Gamma^k$ is required to exhibit violation of E-principle with $k$ copies.
Our proof applies a result mentioned in the introduction of Ref.~\cite{day2017multicolour}, a recent work that explored multi-colored Ramsey numbers for odd cycles. It states that a $k$ edge-colored $K_{2^k+1}$ must contain a monochromatic odd cycle.
We then leverage \Cref{Theorem 2}, which provides the size of the smallest odd cycle in the exclusivity graph of any $n$-cycle PR box, to obtain the bound.

\section{Proofs}
\label{sec:Proofs}
We now provide the proofs of the results highlighted in the previous section.

\subsection{Exclusivity graphs of n-cycle PR boxes}
\label{ssec:E-graphs_and_properties}
We start with the characterization of the exclusivity graphs of $n$-cycle PR boxes.
As mentioned in \Cref{ssec:n-cycle_PRboxes},
these turn out to be well-studied graphs:
the exclusivity graph of an $n$-cycle PR box takes the form of a M\"obius ladder graph ($M_{2n}$) when $n$ is even and of a Prism graph ($Y_n$) when $n$ is odd.
\Cref{fig:E-graph} illustrates these two situations.
Note that the standard notation for Möbius ladder graph, $M_{2n}$, indicates its order (total number of vertices) in the subscript, while the standard notation for the circular ladder graph, $Y_n$, indicates only half of its order.

\begin{figure}
    \centering
    \includegraphics[scale = 0.5]{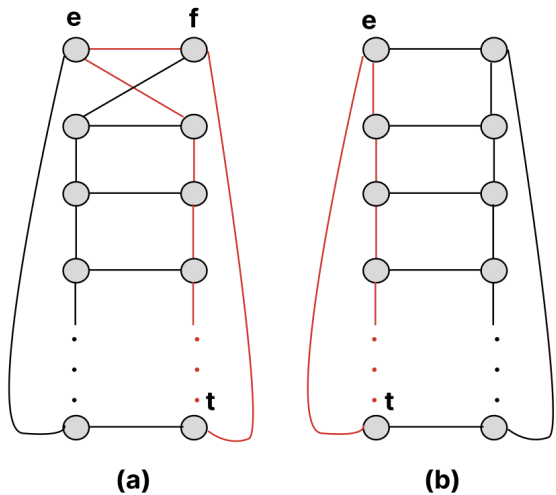}
    \caption{Exclusivity graphs of the $n$-cycle PR boxes: (a) when $n$ is even: Möbius ladder graph ($M_{2n}$) (b) when $n$ is odd: circular ladder graph, also popularly known as Prism graph ($Y_n$). Drawn in red are the smallest odd cycles for the two cases where $e,f,t$ are the events used in \Cref{Theorem 2} to illustrate the construction of these cycles.}
    \label{fig:E-graph}
\end{figure}
\begin{figure}
    \centering
    \includegraphics[scale = 0.9]{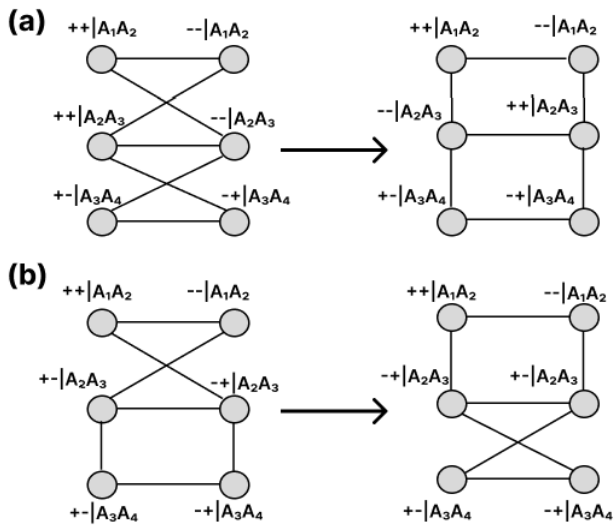}
    \caption{Interchanging nodes of the middle context (a) both consecutive criss-cross connections become vertical (b) a cross connection and a vertical one beneath it turn into vertical and cross connections respectively.}
    \label{fig:interchange}
\end{figure}
\begin{theorem}
\label{theorem1}
    The exclusivity graph of an $n$-cycle PR box is isomorphic
    to the Möbius ladder graph $M_{2n}$ when $n$ is even
    and to the circular ladder graph $Y_n$ when $n$ is odd.
\end{theorem}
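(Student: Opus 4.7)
The plan is to build the exclusivity graph directly from the three defining conditions of the $n$-cycle PR box (Section~\ref{ssec:n-cycle_PRboxes}), classify its edges by their source, and then match the resulting cubic graph to either $M_{2n}$ or $Y_n$ via a parity argument.

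First I would count and label the vertices. Each of the $n$ contexts $\{A_i,A_{i\oplus 1}\}$ contributes exactly two support events (either $\{++,--\}$ or $\{+-,-+\}$), giving $2n$ nodes; I label them $e_i^{+}$ and $e_i^{-}$ by the outcome assigned to $A_i$. Within a single context, the two support events disagree on both measurements and are therefore exclusive, yielding $n$ ``rung'' edges. Non-consecutive contexts share no measurement and hence contribute no edges. Between consecutive contexts $i$ and $i\oplus 1$, the common measurement is $A_{i\oplus 1}$; a short case check on the two possibilities for context $i$ shows that its events connect to those of context $i\oplus 1$ by either a ``parallel'' pair of edges (when context $i$ is anti-correlated) or a ``crossed'' pair (when it is correlated). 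Every vertex therefore has degree $3$, and the graph is a cubic graph consisting of $n$ rungs joined cyclically by rails with some pattern of twists.

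The main step is to show that the isomorphism type of such a rung-and-rail graph depends only on the parity of the number of twists (crossed rail pairs): an even number yields the prism $Y_n$ and an odd number yields the M\"obius ladder $M_{2n}$. I would prove this via the local move illustrated in \Cref{fig:interchange}: swapping the labels $e_i^{+}\leftrightarrow e_i^{-}$ of a single context preserves adjacency but simultaneously flips the inter-context patterns on both sides, converting (cross, cross) $\leftrightarrow$ (parallel, parallel) and (cross, parallel) $\leftrightarrow$ (parallel, cross). Either move preserves the parity of the total number of crosses. Iterated applications then allow us to push all crosses into a single rail position, yielding a canonical form that is manifestly $Y_n$ (if zero crosses remain) or $M_{2n}$ (if one cross remains).

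Finally, I would invoke condition 3: the number of anti-correlated contexts is odd, so the number of correlated contexts---equal to the number of crosses---has parity $n-1 \pmod 2$. Thus $n$ even gives an odd number of crosses and hence $M_{2n}$, while $n$ odd gives an even number of crosses and hence $Y_n$. The main obstacle I anticipate is formalising the canonical-form reduction---verifying that the local interchange moves can always be scheduled so as to concentrate all crosses at one rail position regardless of their initial distribution---but this reduces to a short parity-preserving walk through the indices $i = 0,\ldots,n-1$.
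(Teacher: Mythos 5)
Your proposal is correct and follows essentially the same route as the paper's proof: fix a rung-and-rail drawing in which correlated contexts produce crossed rail pairs and anti-correlated ones produce parallel pairs, then use the local swap of the two events within a context (the paper's \Cref{fig:interchange}) as a parity-preserving move to reduce to a canonical form with zero or one cross, and finally invoke condition 3 to fix the parity. Your explicit statement that the isomorphism type depends only on the parity of the number of crosses is a slightly cleaner packaging of the paper's sequential reduction, but the underlying argument is identical.
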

\begin{figure}
    \centering
    \includegraphics[width=0.5\linewidth]{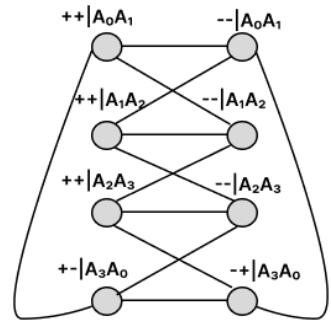}
    \caption{Reference configuration for the exclusivity graph of PR-box. Positively correlated events of a given context, when the joint outcome $++$ is on the left, connect in a cross-like manner to the context under it while negatively correlated events of a given context, when the joint outcome $+-$ is on the left, arrange in a vertical manner to the context under it.}
    \label{fig:configuration}
\end{figure}
\begin{proof}
Consider a specific starting configuration of the exclusivity graph of an $n$-cycle PR box. The configuration starts from the possible events of first context at the top and go down to the final context at the bottom in the cyclic order of contexts. For a given context, if the possible events are perfectly correlated then the node on the left will denote the outcome $++$ while the one on the right denotes outcome $--$ ; while when the possible events are anti-correlated, the node on the left will denote outcome $+-$ while the one on right will denote outcome $-+$. \Cref{fig:configuration} illustrates this reference configuration for a 4-cycle PR-box. Once this arrangement is fixed we know how nodes of a given context connect with the nodes of the context underneath it. Note that since the ordering of the contexts is cyclic, a given context has a measurement in common only to a context before and after it. Hence, no exclusivity relation exists between any nodes beyond the adjacent contexts in the cycle. Within this configuration, the nodes corresponding to perfectly correlated events of a given context connect to the nodes of the context (cyclically) underneath it in a cross like manner irrespective of whether the context has correlated or uncorrelated events. On the other hand, when the nodes, of a given context, correspond to anti-correlated events they connect with the context (cyclically) beneath it in a vertical manner irrespective of whether the context has correlated or uncorrelated events.

With this initial configuration of the exclusivity graph in hand, we now try to convert the cross connections into vertical ones by interchanging the nodes within contexts, in a sequential manner, to redraw it -- hence exhibit isomorphism -- to the Möbius ladder graph and the circular ladder graph for $n$ even and odd cases respectively. Note the obvious but important fact that each cross connection is followed either by a cross connection or a vertical one in this configuration, as shown on the left sides of \Cref{fig:interchange}(a) and \Cref{fig:interchange}(b).

Let us start from the very first pair of contexts leading to a cross connection in this initial configuration. To convert this cross connection into a vertical one, we interchange the nodes in the (cyclically) bottom context (i.e. events of the bottom context in the pair) while keeping the nodes of the first context as they were. This means that, if before this interchange, there was a cross connection right underneath the first cross connection (l.h.s of \Cref{fig:interchange}(a)) then after the interchange there are only vertical connections across the three contexts as is shown in the r.h.s of \Cref{fig:interchange}(a). On the other hand if before the interchange there was a vertical connection beneath the first cross connection (l.h.s of \Cref{fig:interchange}(b)) then after the interchange this vertical connection converts into a cross connection as shown in the r.h.s of \Cref{fig:interchange}(b).

Overall, in the first case we get rid of the two cross connections, converting them into vertical ones, while in the latter case it is as if the first cross connection has been transported (cyclically) downwards into the graph. 
Note that when $n$ is even there will be odd number of cross connections in the initial configuration (since by definition there are odd positively correlated contexts) while when $n$ is odd there are even number of cross connections in the initial configuration. Repeating the interchange process described above sequentially leads to having only one cross connection when $n$ is even while all cross connections will be transformed into vertical ones, when $n$ is odd. Overall, this leads to a Möbius ladder graph ($M_{2n}$) when $n$ is even, see \Cref{fig:E-graph}(a), while a circular ladder graph ($Y_n$) (also known as Prism graph), when $n$ is odd, see \Cref{fig:E-graph}(b).
\end{proof}
A single copy of an $n$-cycle PR box satisfies E-principle. This is because for an $n$-cycle PR box violation of E-principle requires existence of $K_3$ inside the exclusivity graph. This is not possible as captured by the following Corollary of the above theorem.
\begin{corollary}
\label{Corollary1}
The exclusivity graph of an $n$-cycle PR-box does not contain a triangle ($K_3$) for any $n \geq 4$.
\end{corollary}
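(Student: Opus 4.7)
The plan is to invoke \Cref{theorem1} to identify the exclusivity graph as either the Möbius ladder $M_{2n}$ (when $n$ is even) or the prism graph $Y_n$ (when $n$ is odd), and then to verify directly that neither family contains a triangle once $n \geq 4$.

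For the even case, the Möbius ladder $M_{2n}$ consists of a $2n$-cycle together with the $n$ diameter edges $(v_i,v_{i+n})$. Since $2n \geq 8$, the outer cycle itself is triangle-free, and the diameter edges form a matching (any two are vertex-disjoint), so any triangle in $M_{2n}$ would have to contain exactly one diameter edge $(v_i,v_{i+n})$ together with two outer-cycle edges joining $v_i$ and $v_{i+n}$ to a common vertex. That vertex would lie in $\{v_{i-1},v_{i+1}\}\cap\{v_{i+n-1},v_{i+n+1}\}$, forcing $n \equiv 0$ or $n \equiv \pm 2 \pmod{2n}$, which is impossible for $n \geq 4$. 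The odd case is analogous: the prism graph $Y_n$ is the disjoint union of two $n$-cycles joined by a perfect matching, so a triangle would require either three vertices on one cycle (ruled out since $n \geq 5$) or two vertices on one cycle and one on the other, which is impossible because the latter is joined to the former cycle by only a single rung and so cannot be adjacent to two distinct vertices there.

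An alternative, self-contained route bypasses \Cref{theorem1} and argues directly from the scenario. A triangle in the exclusivity graph corresponds to three pairwise exclusive events in the PR-box support, and these events lie in at most three distinct contexts. If they come from three distinct contexts, pairwise exclusivity forces each pair of contexts to share a measurement; the three shared measurements would then be pairwise compatible elements of $\{A_0,\ldots,A_{n-1}\}$, forming a triangle in the compatibility graph $C_n$, which is impossible for $n \geq 4$. If two of the three events share a context, they already disagree on both of its measurements by the PR-box structure, leaving no outcome on any shared measurement simultaneously different from $+$ and $-$ for a third event. Finally, three events cannot share a single context since the PR-box support contains only two events per context.

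There is no genuine obstacle here; the argument reduces to elementary case-checking, either at the level of the graph family or at the level of the scenario. I would favour the direct scenario-level route, since it makes transparent that triangle-freeness ultimately stems from the triangle-freeness of the compatibility graph $C_n$, and it remains valid without relying on the isomorphism established in \Cref{theorem1}.
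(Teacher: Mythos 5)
Both of your routes are correct. Your first route is essentially the paper's own argument: the paper invokes \Cref{theorem1} and simply observes (from \Cref{fig:E-graph}) that in $M_{2n}$ and $Y_n$ the neighbourhood of every vertex is an independent set; your explicit case analysis (triangle-free outer cycle, diameter/rung edges forming a matching, no common neighbour across the two ends of such an edge) is just a rigorous rendering of that observation. Your second, scenario-level route is genuinely different: it never uses the isomorphism of \Cref{theorem1}, instead deriving triangle-freeness from two structural facts about the scenario --- that three distinct contexts of $C_n$ cannot pairwise share measurements without creating a triangle in the compatibility graph, and that the two supported events of any single context exhaust both outcomes of each of its measurements, so no third event can be exclusive to both. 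This buys robustness (it would survive even if one only knew the support structure of the box and not the precise shape of its exclusivity graph) and, as you note, it makes visible that the ultimate source of triangle-freeness is the triangle-freeness of $C_n$ itself together with the two-events-per-context support; the cost is that it does not reuse \Cref{theorem1}, which the paper needs anyway for \Cref{Theorem 2}, so within the paper's economy the graph-theoretic route is the shorter one. One small point of care in your second route: when two of the three events share a context, the third event's context, being a distinct edge of $C_n$ with $n \geq 4$, shares exactly one measurement with theirs, so exclusivity to both would force a single binary outcome to differ from both $+$ and $-$ --- you should state this single-shared-measurement fact explicitly, since it is where $n \geq 4$ (as opposed to $n = 3$) enters that case.
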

\begin{proof}
As can be seen in \Cref{fig:E-graph}, in each graph $M_{2n}$ and $Y_n$ (for any $n$), the nodes in the neighbourhood of a given node are not connected with each other. Therefore, there is no $K_3$ possible inside these graphs.  
\end{proof}
Now that we know how the exclusivity graphs of $n$-cycle scenarios look like we are ready to define the most relevant property of these graphs for our proofs later. The property is the size of the smallest odd cycle inside these graphs. The following theorem captures that.
\begin{theorem}
\label{Theorem 2}
The smallest odd cycle in the exclusivity graph of an $n$-cycle PR box has length $n + 1$ when $n$ is even and length $n$ when $n$ is odd.
\end{theorem}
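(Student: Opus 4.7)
The plan is to invoke \Cref{theorem1} to reduce the problem to analysing the odd girth of the prism graph $Y_n$ (when $n$ is odd) and the Möbius ladder $M_{2n}$ (when $n$ is even). For the upper bounds, I would exhibit explicit odd cycles. For $n$ odd, either of the two $n$-cycle ``rails'' of the prism $Y_n$ is itself an odd cycle of length $n$. For $n$ even, I would work in the reduced form produced by the proof of \Cref{theorem1} (with $n-1$ vertical connections and a single cross connection) and trace a cycle of length $n+1$ that traverses $n-1$ vertical rungs along one row, takes the cross edge ($+1$), and closes via a within-column edge ($+1$). These constructions correspond to the red cycles drawn in \Cref{fig:E-graph}.

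For the matching lower bounds, I plan to parameterise each cycle by the numbers $w$ of within-column edges, $v$ of vertical inter-column edges, and $x$ of cross inter-column edges. Two structural facts drive the argument: (i) closure of rows forces $w + x \equiv 0 \pmod{2}$, since within-column and cross edges each flip the row; and (ii) the projection of the cycle onto the column cycle $C_n$ is a closed walk of length $v + x$, so its signed length must lie in $n\mathbb{Z}$. For the prism graph $Y_n$ (no cross edges, so $x = 0$), walks confined to a proper sub-arc of $C_n$ are back-and-forth with even $v$, giving even total length. Walks that actually traverse $C_n$ require $v \geq n$; when $v = n$ precisely, the walk visits each column once, which forces $w = 0$ (otherwise one would need to revisit a column through additional inter-column edges), yielding the rail cycle of length $n$, and any additional rungs increase the length by at least $2$.

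For the Möbius ladder $M_{2n}$, I plan to exploit the bipartiteness of the linear ladder $L_n$ obtained by deleting the two cross edges, using the bipartition $(i, r) \mapsto i + r \pmod 2$. Cycles with $x = 0$ live entirely inside $L_n$ and are therefore even. Cycles with $x = 2$ decompose into two paths in $L_n$ joining the endpoints of the two cross edges; bipartite parity forces each of these paths to have odd length, so the total (including both cross edges) is even. The crucial case is $x = 1$: the cycle consists of one cross edge plus a path in $L_n$ between vertices of the form $(0, 1-r)$ and $(n-1, r)$. Their bipartition classes differ by $n - 2 + 2r \equiv n \pmod 2$, so for $n$ even the path has even length, which together with the cross edge yields an odd cycle. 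Moreover, such a path must span $n-1$ columns and change row at least once, forcing length $\geq n$ in the $2 \times n$ grid $L_n$, hence cycle length $\geq n+1$.

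The main obstacle I anticipate is the careful verification in the $x=1$ case for $M_{2n}$, particularly showing that the shortest path in $L_n$ between the two cross-edge endpoints is indeed exactly $n$ (via a combinatorial argument on column and row moves), and that no subtler configuration combining cross edges with non-trivial column back-and-forth yields a shorter odd cycle. Once these ingredients are in place, the bound is tight and the theorem follows by matching the explicit constructions with the corresponding lower bounds.
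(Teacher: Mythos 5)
Your proposal is correct, and both upper bounds match the paper's explicit constructions, but your lower-bound argument takes a genuinely different route. The paper's proof is a short counting argument: any odd cycle shorter than the claimed bound has fewer than $n$ vertices, hence misses at least one context entirely, hence lies inside a linear (sub-)ladder obtained by deleting that context's two vertices; ladders are bipartite, so by K\H{o}nig's characterisation they contain no odd cycle, contradiction. You instead set up an invariant-based analysis: row-parity closure ($w + x \equiv 0 \pmod 2$), the winding number of the projection onto the column cycle $C_n$, and, for the M\"obius ladder, a case split on the number $x \in \{0,1,2\}$ of cross edges used, reducing each case to bipartite parity of the linear ladder $L_n$ obtained by \emph{edge} deletion rather than vertex deletion. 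Both arguments ultimately rest on the bipartiteness of the linear ladder, but yours is more computational and, in exchange, more informative: it characterises exactly which cycles are odd (those winding an odd number of times around $C_n$ in the prism, and those using exactly one cross edge in the M\"obius ladder), whereas the paper only excludes short ones. The paper's vertex-deletion observation is worth absorbing, as it handles both parities uniformly in one sentence; your $v \geq n$ winding bound and your ``path must span $n-1$ columns and change row at least once'' bound are the places where your write-up would need the most care, but they are sound.
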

\begin{proof}
We first show an explicit construction of the respective odd cycles mentioned in the theorem statement and then provide a unified proof for the two cases exhibiting why there can't be a smaller odd cycle in the two cases.

Explicit construction of respective odd cycles: pick a path from the first context to the last where only one node is picked from each context. Say you pick event $e$ from the first context and end up at some event $t$ in the last context from top to bottom. Now, to close this path into a cycle we need to go back from event $t$ to the first context. When $n$ is even we can return back to event $f \neq e$ in the first context (because event $t$ is connected to $f$), hence giving an odd cycle of size $n + 1$, see  \Cref{fig:E-graph}(a). In the case when $n$ is odd, event $t$ straight away connects with event $e$ and hence we get an odd cycle of size $n$, see  \Cref{fig:E-graph}(b). 

Now we show that a smaller odd cycle can't exist in the two cases. Note that for a smaller odd cycle to exist we can't use events from all the contexts. More explicitly, when $n$ is even i.e. the graph is $M_{2n}$, a smaller odd cycle would have a size less than $n+1$, which means the biggest such cycle can only be of size $n-1$. When $n$ is odd, the biggest such odd cycle would be of size $n - 2$. Since there are $n$ contexts, we cannot pick at least one event from each context to construct such odd cycles. This means that such an odd cycle has to live inside a sub graph utilizing less than $n$ contexts in both the cases. This further implies that such a sub graph would either be a ladder graph or a sub graph of a ladder graph for both cases. It is well known that a ladder graph (hence any of its sub graph as well) is a bi-partite graph. Furthermore, it was first shown by König that a graph is bi-partite if and only if it contains no odd cycle \cite{konig1990theory,koh2007introduction}. This proves that an odd cycle less than $n+1$ for $M_{2n}$ when $n$ is even and less than $n$ for $Y_n$ when $n$ is odd can't exist. This finishes our proof.
\end{proof}
\begin{corollary} 
\label{Corollary2}
The exclusivity graph of an n-cycle PR-box contains a $5$-cycle $C_5$ only if $n=4$ or $n=5$.
\end{corollary}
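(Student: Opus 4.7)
The plan is to obtain \Cref{Corollary2} essentially for free from \Cref{Theorem 2}, using the trivial observation that a $C_5$ is an odd cycle of length $5$. Consequently, if the exclusivity graph of an $n$-cycle PR box contains a $C_5$ subgraph, then its smallest odd cycle must have length at most $5$, and the task reduces to determining for which $n$ this can happen.

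The first step is to invoke \Cref{Theorem 2}, which states that the smallest odd cycle in the exclusivity graph of an $n$-cycle PR box has length $n+1$ when $n$ is even and length $n$ when $n$ is odd. The second step is to impose the bound ``smallest odd cycle length $\leq 5$'' in each parity case: this becomes $n+1 \leq 5$ in the even case and $n \leq 5$ in the odd case. Combined with the standing assumption $n \geq 4$, these two inequalities isolate exactly $n = 4$ (even case) and $n = 5$ (odd case), thereby establishing the ``only if'' direction claimed by the corollary. The contrapositive reading, which is probably the cleanest way to write the argument down, is that for any $n \geq 6$ the smallest odd cycle of the exclusivity graph already has length $\geq 7$, so no $C_5$ subgraph can exist.

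There is no genuine obstacle here, since all the heavy lifting is already packaged inside \Cref{Theorem 2}; the corollary is a one-line arithmetic consequence. The only subtlety worth flagging is that ``contains a $C_5$'' is to be interpreted as ``contains a $5$-cycle as a subgraph'' (equivalently, as an odd cycle of length exactly $5$), which is precisely the regime controlled by the girth-like bound of \Cref{Theorem 2}, so the translation from corollary to theorem is immediate. If a stronger ``if and only if'' statement were desired, one could additionally note that the two length-$5$ odd cycles exhibited explicitly in the proof of \Cref{Theorem 2} for $n=4$ and $n=5$ are themselves copies of $C_5$, confirming the converse direction as well.
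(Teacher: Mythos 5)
Your proof is correct and follows essentially the same route as the paper: both simply read off from \Cref{Theorem 2} that the smallest odd cycle has length $5$ exactly when $n=4$ or $n=5$ and exceeds $5$ for all $n\geq 6$, so no $C_5$ can exist in the latter case. Your write-up merely makes the parity-wise arithmetic ($n+1\leq 5$ versus $n\leq 5$) explicit, which the paper leaves implicit.
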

\begin{proof}
From \Cref{Theorem 2}, when $n = 4$ or $n=5$, the smallest odd cycle has size 5. For $n \geq 6$, it is always larger than 5.
\end{proof}
\begin{figure}
    \centering
    \includegraphics[scale = 0.4]{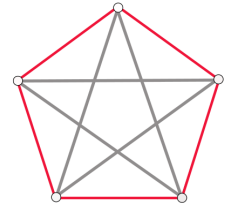}
    \caption{Edge bifurcation of a $K_5$ exclusivity graph into two edge-wise disjoint $C_5$ graphs. The outer one denoted in red and the inner one in grey.}
    \label{fig:K5_bifurcation}
\end{figure}
\subsection{Impossibility of extending a trivial $K_{2^k}$ to $K_{2^k +1}$}
\label{ssec:impossible}
Before we move on to studying activation effects of $n$-cycle PR boxes we eliminate a trivial possibility of violation of E-principle. We know that for an $n$-cycle PR box, the condition of violation of E-principle is the existence of $K_{2^k+1}$ inside the $k$ edge-colored joint exclusivity multigraph ($\Gamma^k$). Here we show that there always exists a $K_{2^k}$ inside the $k$ edge-colored $\Gamma^k$ that can't be extended to a $K_{2^k + 1}$ supporting the fact that finding a $K_{2^k + 1}$ isn't a straightforward task. 

It is easy to visualize that OR product of two complete graphs $K_n$ and $K_m$ is a complete graph $K_{m+n}$. This implies that inside $\Gamma^k$ exists sub graph $K_{2^k}$ which can be seen as an OR product of $k$, $K_2$ graphs. Such a $K_{2^k}$, let's call it $T$, can never be extended to a $K_{2^k+1}$ inside $\Gamma^k$. We now prove this claim: Let us see if some node $v$ exists in $V(\Gamma^k) \setminus V(T)$ such that $v$ and $V(T)$ together produce a clique of size $2^k + 1$. Let us say that $v$ is exclusive to an event $e \in V(T)$ due to its $lth$ component i.e. $(e_l,v_l) \in E(G)$. For each event in $V(T)$ the $lth$ component is either $e_l$ or $e_{l}^{\perp}$. In fact, half of the events in $V(T)$ have $e_l$ as their $lth$ component while others have $e_{l}^{\perp}$. If $(e_l,v_l) \in E(G)$ and we know, by construction that $(e_l,e_{l}^{\perp}) \in E(G)$, then $(e_{l}^{\perp},v_l) \notin E(G)$ because there can't be any $K_3$ in $G$ as shown in Corollary 1. Therefore, a connection due to the $lth$ component would divide the set of events in $V(T)$ into two equal subsets, say $A$ and $B$ each of size $2^{k-1}$, where all events in $A$ are connected with $v$ due to the $lth$ component being $e_l$ while $v$ can only connect with the nodes in $B$ because of components other than the $lth$ one. Now, say $v$ connects with a one node in $B$ due to the $mth$ component where $m \neq l$. Due to this the nodes within $B$ now get divided into two equal disjoint subsets with each subset having size $2^{k-2}$. Progressing this way, for each new component of $v$, we are left with a subset of nodes which are unconnected to $v$ due the current component under consideration with half the size of the previous set. Therefore, when we have utilized all the components of $v$, we will be left with a subset of size $2^{k-k} = 1$ that will remain disconnected from $v$ suggesting that we can never form a $K_{2^k + 1}$ using $v$ and $V(T)$. Therefore, such a $K_{2^k}$ can't be extended to $K_{2^k + 1}$. We also state a trivial consequence of the fact, that an OR product of $K_m, K_n$ will be a $K_{m+n}$, mentioned earlier above as a Remark.
\begin{Remark}
\label{Remark1}
Given a clique of size $s$ in the joint exclusivity graph of $k$ copies $\Gamma^k$, one can find a clique of size $2s$ inside $\Gamma^{k+1}$.   
\end{Remark} This holds because one can do an OR product of $K_s$ inside $\Gamma^k$ with a $K_2$ inside the $(k+1)$th copy of the original graph. This produces a $K_{2s}$ inside $\Gamma^{k+1}$.

\subsection{Activation effects of $n$-cycle PR boxes}
\label{ssec:activation_eff}
With the trivial case out of our way we are now ready to provide our results concerning violation of E-principle for different $k$. Simplest case being $k = 2$ where a $K_5$ is desired inside $\Gamma^2$ to witness violation of E-principle. 
\begin{theorem}
\label{theorem3}
   The joint exclusivity graph of two independent $n$-cycle PR-boxes contains a $K_5$ when $n = 4$ or $n=5$. That is, two copies activate violation of the E-principle for $4$- and $5$-cycle PR boxes. 
\end{theorem}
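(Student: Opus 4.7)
The plan hinges on the pentagon-pentagram decomposition of $K_5$ into two edge-disjoint Hamiltonian $5$-cycles: the outer pentagon with edges $(i,i{+}1 \bmod 5)$ and the inner pentagram with edges $(i,i{+}2 \bmod 5)$. This is exactly the decomposition hinted at in \Cref{fig:K5_bifurcation}, and it suggests that one should try to realise the pentagon edges through color $1$ (the first copy) and the pentagram edges through color $2$ (the second copy). By \Cref{Corollary2}, the exclusivity graph $G$ of the $n$-cycle PR box contains a $C_5$ precisely when $n=4$ or $n=5$, so both copies supply a $C_5$; the task is to align them inside $\Gamma^2 = \Gamma^2(G,G)$.

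Concretely, I would fix a $5$-cycle $a_1, a_2, a_3, a_4, a_5$ in $G$ and define five joint events in $\Gamma^2$ by
\begin{equation}
    x_i = (a_i,\; a_{\sigma(i)}),
\end{equation}
where $\sigma$ is the permutation $1\mapsto 1,\ 3\mapsto 2,\ 5\mapsto 3,\ 2\mapsto 4,\ 4\mapsto 5$ taking the pentagram order $1,3,5,2,4$ to the natural cyclic order $1,2,3,4,5$. The pentagon edges $(i,i{+}1)$ then correspond to the adjacencies $a_i \sim a_{i+1}$ in the first component (witnessed by color $1$), while the pentagram edges $(i,i{+}2)$ translate, under $\sigma$, to consecutive adjacencies along $a_1,a_2,a_3,a_4,a_5$ in the second component (witnessed by color $2$). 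Verifying that the five $x_i$ are distinct and that every pair is joined by at least one colored edge then shows that these five joint events form a $K_5$ in $\Gamma^2$.

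To finish, I would invoke the uniform weighting $\sfrac{1}{4}$ of joint events for two copies of an $n$-cycle PR box to conclude that the clique witnesses an E-principle violation with value $\sfrac{5}{4}$, which activates the principle for $n \in \{4,5\}$. The main thing to be careful about is just the bookkeeping of the permutation $\sigma$: one must check that pentagram edges really do pull back to consecutive pairs in $G$, and that the tuples $x_1,\ldots,x_5$ are pairwise distinct; neither is deep, but both need to be spelled out. No Ramsey-theoretic input is required here; the argument is a direct combinatorial construction, and the Ramsey-style obstructions enter only in the complementary \Cref{theorem4} establishing that no such $K_5$ exists when $n \geq 6$.
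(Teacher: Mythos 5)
Your construction is correct and follows essentially the same route as the paper's proof: decompose $K_5$ into the edge-disjoint pentagon and pentagram, realise each as a $C_5$ in one copy of the exclusivity graph (possible for $n=4,5$ by \Cref{Corollary2}), and pair them up into five joint events of weight $\sfrac{1}{4}$ each. Your explicit permutation $\sigma$ is just a concrete way of writing the paper's alignment of the two cycles $\{e_i\}$ and $\{f_i\}$, and it checks out.
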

\begin{figure}
    \centering
    \includegraphics[scale = 0.4]{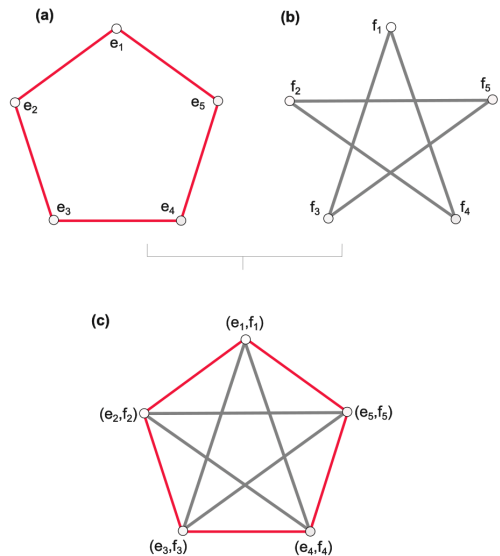}
    \caption{Constructing a $K_5$ using two independent $C_5$ cycles via multicolor product: (a) events coming from PR-box whose exclusivity graph is colored in red (b) events coming from PR-box whose exclusivity graph is colored in grey (c) joint events from the two copies leading to a $K_5$.}
    \label{fig:K_5_with_n_45}
\end{figure}
\begin{proof}
We start with a nice observation that all the edges of a $K_5$ can be divided into two disjoint subsets. The outer edge-induced graph, denoted with red in \Cref{fig:K5_bifurcation}, denotes a $C_5$ while the inner one, denoted with grey in \Cref{fig:K5_bifurcation}, is also isomorphic to $C_5$. The independence of events at each node of $K_5$ along with this edge based bifurcation of $K_5$, means that one can use \Cref{Corollary2} to ensure that the exclusivity in the outer and the inner $C_5$ graphs is realized independently by two different copies of the $n$-cycle PR-boxes when $n=4,5$. $C_5$ in \Cref{fig:K_5_with_n_45}(a) denotes events $\{e_i\}_{i=1}^5$ coming from one copy while events $\{f_i\}_{i=1}^5$ in \Cref{fig:K_5_with_n_45}(b) are coming from the other PR-box. One can then construct joint independent events $\{(e_i,f_i)\}$ whose exclusivity graph is a $K_5$ as denoted in \Cref{fig:K_5_with_n_45}(c). This means E-principle will be violated for $n = 4,5$ by just using two copies.
\begin{figure}
    \centering
    \includegraphics [scale = 0.4]{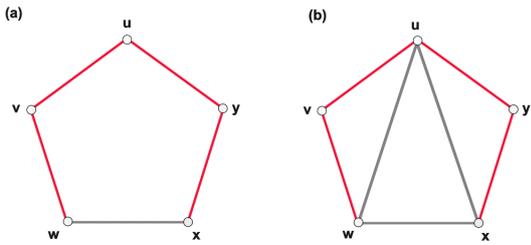}
    \caption{Impossibility of extending a $C_5$ in (a) with one edge supported by $n$-cycle PR-box whose exclusivity graph is colored grey while the rest of the edges are supported by the $n$-cycle PR box whose exclusivity graph is colored red, to (b) a $K_5$ for all $n \geq 6$ cycle scenarios since it is inconsistent with \Cref{Corollary1}.}
    \label{fig:One_S1_C_5}
\end{figure}
Notice that \Cref{Corollary2} implies that such a construction of $K_5$ won't work for $n \geq 6$. We now show that there is also no other way to obtain a $K_5$ with two copies when $n \geq 6$.
\end{proof}
\begin{theorem}
\label{theorem6}
    The joint exclusivity graph of two independent $n$-cycle PR-boxes does not contain a $K_5$ for $n\geq 6$. That is, two copies do not activate a violation of the E-principle for $n \geq 6$-cycle PR boxes.
\end{theorem}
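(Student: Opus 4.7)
The plan is to assume for contradiction that $\Gamma^2$ contains a $K_5$ and then use the monochromatic-projection machinery (\Cref{Prop2}) to force $G$ to contain either a $C_3$ or a $C_5$. Since \Cref{Corollary1} rules out $C_3$ and \Cref{Corollary2} rules out $C_5$ whenever $n \geq 6$, the assumption collapses.

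Concretely, let $\{u^{(1)},\ldots,u^{(5)}\} \subseteq V(\Gamma^2)$ be the hypothetical clique. By definition of $\Gamma^2$, each of the $\binom{5}{2}=10$ pairs is joined by at least one edge of colour $1$ or $2$. I would choose, for every pair, a single colour present on one of the corresponding edges (making an arbitrary choice whenever both colours appear). This yields an ordinary $2$-edge-colouring of $K_5$. The combinatorial crux is then the claim that \emph{any $2$-edge-colouring of $K_5$ contains a monochromatic odd cycle of length $3$ or $5$}. This is exactly the Ramsey-type identity $R(\leq(C_5,C_3))=5$ mentioned in \Cref{ssec:k_2}, and it also admits a short self-contained proof: the only odd cycles supported on $5$ vertices have length $3$ or $5$, so a colour class avoiding both is bipartite; if \emph{both} classes were bipartite, combining their $2$-colourings coordinate-wise would yield a proper vertex colouring of $K_5$ with $2\cdot 2 = 4$ colours, contradicting $\chi(K_5)=5$.

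Having produced, in some colour $i \in \{1,2\}$, a monochromatic odd cycle of length $\ell \in \{3,5\}$ on the clique vertices, I would observe that this cycle lies inside the monochromatic projection $H_i(\Gamma^2)$ (its edges, being colour-$i$ edges between clique nodes, are by construction edges of the full multigraph $\Gamma^2$ in colour $i$). \Cref{Prop2} then delivers an odd cycle in $G_i = G$ of length at most $\ell \leq 5$. By \Cref{Corollary1} the cycle cannot have length $3$, and by \Cref{Corollary2} it cannot have length $5$ when $n \geq 6$, which is the contradiction.

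The only point requiring a brief check is that the colour-choice reduction is faithful: discarding one of the two colours on a multi-edge can only shrink the colour-$i$ edge set, so a monochromatic odd cycle surviving in the reduced colouring is automatically an odd cycle in the full projection $H_i(\Gamma^2)$. Beyond this bookkeeping, the argument is a clean concatenation of an elementary chromatic (or Ramsey) fact about $K_5$ with structural results already established earlier in the paper, so I do not anticipate any serious obstacle; the chromatic step is the only non-trivial ingredient and it is both short and standard.
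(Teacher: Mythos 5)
Your proof is correct, but it is not the route the paper takes for this theorem. The paper's proof of \Cref{theorem6} is an explicit case analysis: it starts from a $C_5$ inside the putative $K_5$, notes by \Cref{Corollary2} that its edges cannot all carry one colour, splits into the cases of one or two minority-coloured edges (adjacent or not), and repeatedly uses \Cref{Corollary1} to force the colours of the remaining chords until a monochromatic $C_3$ or $C_5$ appears. Your argument instead reduces the multigraph clique to an ordinary $2$-edge-colouring of $K_5$, proves directly (via the observation that two bipartite colour classes would properly $4$-colour $K_5$) that some colour class contains an odd cycle of length $3$ or $5$, pushes that cycle through \Cref{Prop2} into $G$, and invokes \Cref{Corollary1,Corollary2}. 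This is precisely the Ramsey-theoretic route that the paper mentions only as a \emph{corroboration} of \Cref{theorem6} (via $R(\leq(C_5,C_3))=5$ from Erd\H{o}s et al.), and your bipartite-labelling argument is the $k=2$ instance of the counting argument the paper deploys later for \Cref{theorem5}. What your version buys is uniformity and brevity: it needs no figures or case distinctions and generalises immediately to more colours; what the paper's case analysis buys is a fully elementary, picture-driven argument that does not rely on the chromatic/bipartite observation. Your handling of the colour-choice reduction and the passage from the reduced colouring back to $H_i(\Gamma^2)$ is sound, so I see no gap.
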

\begin{proof}
\begin{figure}
    \centering
    \includegraphics[scale = 0.4]{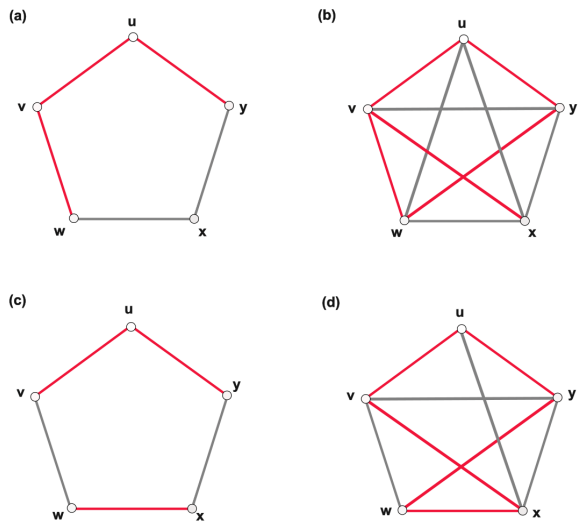}
    \caption{Impossibility of extending a $C_5$, with two edges supported by events from an $n$-cycle PR box whose exclusivity graph is colored grey while the other edges are supported by an  $n$-cycle PR box whose exclusivity graph is colored red, to a $K_5$: (a) when the two grey edges are adjacent, the extension leads to violation of \Cref{Corollary1} as shown in (b) with nodes $\{u,w,x\}$; (c) when the two grey edges are not adjacent, the extension leads to violation of \Cref{Corollary2} due to a $C_5$ with nodes $\{u,v,x,w,y\}$ as shown in (d).}
    \label{fig:two_S2_K_5}
\end{figure}
If a $K_5$ were to exist for two $n \geq 6$ PR-boxes, thanks to \Cref{Corollary2}, a $C_5$ inside the joint exclusivity structure $\Gamma^2$ cannot have all its edges supported by events from just one PR box. In other words, all edges cannot be assigned the same color. This leads to two cases:
\begin{enumerate}
    \item where a $C_5$ has exactly one edge that is colored grey while the rest are colored red, see \Cref{fig:One_S1_C_5}(a).
    \item where a $C_5$ has exactly two edges that are colored grey while the rest are colored red. This leads to two situations based on whether the two grey edges are adjacent or not in a $C_5$, see \Cref{fig:two_S2_K_5}(a),(c).   
\end{enumerate}
Consider case 1, see \Cref{fig:One_S1_C_5}(a), with just one edge with color grey. \Cref{Corollary1} implies that in the $C_3$ composed of events $\{u,v,w\}$ the edge connecting events $u$ and $w$ must be colored grey. The same holds true for the $C_3$ composed of $\{u,x,y\}$ where the edge corresponding to events $\{u,x\}$ is also forced to be of color grey, see \Cref{fig:One_S1_C_5}(b). This leads to a $C_3$ graph $\{u,w,x\}$ where all edges are of grey color. This $C_3$ violates \Cref{Corollary1}, hence case 1 is impossible.
With a similar sequential extension of graphs via \Cref{Corollary1} as above we can show that case 2 is also impossible. \Cref{fig:two_S2_K_5} covers both situations mentioned in case 2 above. Starting from the sub case represented by \Cref{fig:two_S2_K_5}(a), using \Cref{Corollary1} to sequentially extend each of the graphs with vertex sets $\{u,v,y\}$, $\{w,x,y\}$, $\{u,v,w\}$, $\{x,v,w\}$, and
$\{u,v,x\}$ into a $C_3$ leads to a $K_5$ depicted in \Cref{fig:two_S2_K_5}(b) where thee $C_3$ composed of events $\{u,w,x\}$ violates \Cref{Corollary1}.

Starting from the sub case represented in \Cref{fig:two_S2_K_5}(c), using \Cref{Corollary1} to sequentially extend each of the graphs with vertex sets $\{u,v,y\}$, $\{v,x,y\}$, $\{u,v,x\}$, and $\{w,v,y\}$ into a $C_3$ leads to a $C_5$ composed of $\{u,v,x,w,y\}$ with all its edges having color red, see \Cref{fig:two_S2_K_5}(d), violating \Cref{Corollary2}. This exhausts the possibility of violation of E-principle with two copies of $n$-cycle PR boxes $ \forall \; n \geq 6$.
\end{proof}

\begin{corollary}
    For $n \geq 6$, the joint exclusivity graph of a pair of $n$-cycle PR-boxes does not contain a $K_l$ for any $l \geq 6$. 
\end{corollary}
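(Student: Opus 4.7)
The plan is to derive this corollary as an immediate monotonicity consequence of \Cref{theorem6}. The key observation is that any complete graph $K_l$ with $l \geq 6$ contains $K_5$ as an induced subgraph: simply restrict to any five of its $l$ vertices, and every pair among them remains connected by an edge (or, in the multigraph setting, by at least one colored edge). Hence, the presence of a $K_l$ in $\Gamma^2$ for some $l \geq 6$ would automatically entail the presence of a $K_5$ in $\Gamma^2$.

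My proof would then proceed by contraposition. Suppose, toward a contradiction, that for some $n \geq 6$ the joint exclusivity graph $\Gamma^2$ of two independent $n$-cycle PR boxes contained a $K_l$ with $l \geq 6$. Pick any five of its vertices; the induced subgraph on them is a $K_5$ sitting inside $\Gamma^2$. But \Cref{theorem6} established that $\Gamma^2$ contains no $K_5$ whenever $n \geq 6$. This contradiction yields the desired conclusion.

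Since the argument is a direct invocation of an already-proved theorem together with the trivial subgraph-inheritance property of complete graphs, there is essentially no obstacle to overcome. The only subtlety worth flagging explicitly is that ``containing a $K_l$'' here must be understood in terms of the (edge-colored) clique structure that underlies violations of the E-principle, i.e.\ a set of $l$ pairwise exclusive joint events; with this reading, passing from $l$ pairwise exclusive events to any five of them manifestly preserves pairwise exclusivity, so the subgraph inheritance step is unambiguous.
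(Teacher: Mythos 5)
Your proposal is correct and follows exactly the same route as the paper: a $K_l$ with $l \geq 6$ contains a $K_5$ as an induced subgraph, which is excluded by \Cref{theorem6} for $n \geq 6$. (The paper's own proof cites \Cref{theorem3} here, but the intended reference is clearly the non-existence result of \Cref{theorem6}, which is what you correctly invoke.)
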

\begin{proof}
For a pair of n-cycle PR-boxes of a given $n \geq 6$, a $K_l$ exclusivity graph for $l\geq 6$ would contain a $K_5$ as its induced sub graph but the latter can't exist due to \Cref{theorem3}, hence, proving the corollary.
\end{proof}
%
\subsection{Exploiting Ramsey theory}
\label{ssec:exploiting_ramsey_theory}
Until now we have seen that $n = 4,5$ cycle PR boxes show activation effects as soon as more than one copy is considered while $n \geq 6$ cycle PR boxes don't when two copies are taken into account. We now see how our new found connection with Ramsey theory can help us say more about activation effects of $n$-cycle PR boxes. 

The first result follows from the canonical Ramsey number $R(K_3,K_3) = 6$. For $n$-cycle PR boxes this means that for two copies if a $K_6$ exists inside $\Gamma^2$, it will have to have at least one monochromatic $K_3$ as sub graph. This isn't possible, thanks to \Cref{Corollary1}, for any $n$. For $n = 4,5$ this means that we can't violate E-principle beyond $K_5$ i.e. $R(K_3,K_3) = 6$ upper bounds violation of E-principle in $\Gamma^2$. This explains why in Ref.~\cite{fritz2013local} authors didn't find a clique bigger than $K_5$ for two copies of CHSH ($n = 4$ cycle) PR box. 

We now utilize known Ramsey theory results to extract useful results for $n \geq 6$ cycle PR boxes. Our first result follows from one of the many small but significant results proven in \cite{erdos1976generalized}. This work from Paul Erdös et. al. is one of the first works in the Ramsey theory literature that considered going beyond two colorings of the edges. The authors considered three and four colorings of edges where the structures assigned to the colors consisted of complete graphs, complete bipartite graphs, paths, and cycles. The relevant result for us is when, for three possible colorings of edges, each color is assigned an odd cycle. The authors define `cycle Ramsey number' $R(\leq [C]) = p$ (where $[C] \equiv (C_{2d_1+1},C_{2d_2+1},C_{2d_3+1})$ which is to be read as color $i$ being assigned a cycle of size $2d_i+1$) which denotes the smallest integer $p$ such that for any arbitrary three coloring of $K_p$ there exists at least one $i$ that contains an odd cycle of size less than or equal to $2d_i+1$.
\begin{theorem}
\label{theorem4}
    For $n \geq 6$, the joint exclusivity graph of three independent $n$-cycle PR boxes does not contain a $K_9$. That is, three copies cannot activate a violation of the E-principle for $n \geq 6$-cycle PR boxes.
\end{theorem}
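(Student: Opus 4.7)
The plan is to derive a contradiction from the existence of a $K_9$ inside $\Gamma^3$ by invoking the three-color cycle Ramsey result $R(\leq(C_5,C_5,C_5)) = 9$ from Erdős et al., which is precisely the kind of statement the preceding definitions of $R(\leq[C])$ were setting up. Concretely, I would assume for contradiction that $\Gamma^3$ contains a $K_9$, i.e., a set $S$ of nine joint events that are pairwise exclusive. Between any two events in $S$ there is at least one colored edge in $\Gamma^3$; arbitrarily selecting one such colored edge per pair yields an honest $3$-edge-colored $K_9$. Because any constraint forced by Ramsey theory on an edge-coloring of a graph only becomes stronger on a multigraph (more edges cannot destroy a monochromatic substructure, as noted in \Cref{ssec:Ramsey}), the Ramsey result applies to this induced edge-colored clique.

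By $R(\leq(C_5,C_5,C_5)) = 9$, there must exist some color $i \in \{1,2,3\}$ such that the monochromatic projection onto color $i$ of the chosen edge-coloring contains an odd cycle of length at most $5$, i.e., a $C_3$ or a $C_5$. Since every edge of this monochromatic cycle is also an edge of $H_i(\Gamma^3)$, we conclude that $H_i(\Gamma^3)$ contains an odd cycle of length $3$ or $5$. Then by \Cref{Prop2}, the original exclusivity graph $G_i$ of the $i$th $n$-cycle PR box must likewise contain an odd cycle of length at most $5$.

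However, for $n \geq 6$, \Cref{Corollary1} forbids a $C_3$ in $G_i$ and \Cref{Corollary2} forbids a $C_5$ in $G_i$. This contradicts the conclusion of the previous paragraph, so no such $K_9$ can exist in $\Gamma^3$. Since violating the E-principle with $k=3$ copies of an $n$-cycle PR box requires a $K_{2^3+1} = K_9$ inside $\Gamma^3$, the impossibility of such a clique is exactly the impossibility of activation for $n \geq 6$ with three copies.

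The main subtlety, rather than any computational obstacle, is the transition from the edge-colored multigraph $\Gamma^3$ to a genuine $3$-edge-colored $K_9$ to which the Ramsey result directly applies; the argument given above resolves it by picking one color per pair and using monotonicity of monochromatic-substructure constraints under adding edges. A secondary point worth spelling out carefully in the final write-up is that the monochromatic cycle produced by Ramsey is guaranteed to have length in $\{3,5\}$ (odd and at most $5$), which is precisely the range ruled out by \Cref{Corollary1,Corollary2} simultaneously — so both corollaries are genuinely needed, and the argument would break if Ramsey only furnished, say, a cycle of length at most $7$.
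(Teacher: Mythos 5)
Your proposal is correct and follows essentially the same route as the paper: invoke $R(\leq(C_5,C_5,C_5))=9$ to force a monochromatic odd cycle of length $3$ or $5$ in some projection, pull it back to the individual exclusivity graph via \Cref{Prop2}, and contradict the fact that for $n\geq 6$ the smallest odd cycle there has length at least $7$ (you cite \Cref{Corollary1} and \Cref{Corollary2} where the paper cites \Cref{Theorem 2} directly, but these are the same fact). Your explicit handling of the multigraph-to-graph reduction by selecting one colored edge per pair is a welcome clarification of a point the paper treats only in general terms in \Cref{ssec:Ramsey}.
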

\begin{proof}
$R(\leq (C_5,C_5,C_5)) = 9$ obtained in \cite{erdos1976generalized} translated in the language of our work means that for a $K_9$ to exist in a joint exclusivity graph constructed from three statistically independent copies of $n$-cycle PR boxes (for any $n \geq 4$), an individual copy itself must have a $C_5$ or a $C_3$ cycle as its sub graph. But from \Cref{Theorem 2}, we know that for any $n \geq 6$ cycle PR box, the smallest odd cycle has size more than 5 implying that a $K_9$ can't exist in such cases. This finishes the proof of \Cref{theorem4}. This marks the impossibility of violating E-principle using three copies of $n \geq 6$-cycle PR boxes.
\end{proof}
This theorem provides the first example of extremal non-classical correlations for which three copies aren't enough to witness violation of E-principle. Earlier, for the scenarios explored, in \cite{fritz2013local}, two copies were always enough to witness E-principle's violation.

Another result from Ref.~\cite{day2017multicolour} exploring edge coloring of $K_{2^k + 1}$ graphs with $k$ colors in the context of Ramsey theory helps us infer the class of $n$-cycle scenarios whose PR boxes can't violate E-principle for a given $k$, where $k \geq 4$. The result states that any $k$ edge-colored clique of size $2^k + 1$ must have a monochromatic odd cycle. For the ease of reader we explain why that must be the case. It is easy to show that for complete graph $K_{2^k}$ there exists a $k$ coloring of edges such that each color comprises a bipartite graph. Before we provide an explicit construction for this we show that such $k$ edge colorings exist for $K_m$ only if $m \leq 2^k$. 
\begin{figure}
    \centering
    \includegraphics[scale = 0.8]{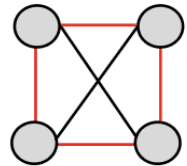}
    \caption{Two edge-colored $K_4$ where each monochromatic sub graph is a bi-partite graph.}
    \label{fig:bi-partite}
\end{figure}

Consider a $k$ edge coloring of $K_m$ where each coloring corresponds to a bi-partite graph. For each vertex of this graph $K_{m}$, imagine writing a binary vector of length $k$ where the $i{th}$ coordinate determines which side of the bi-partition of color $i$ this vertex lies. All vertices of $K_m$ must receive distinct binary vector because if two nodes receive the same binary vector it means there is no edge that connects the two nodes. Since there can be at most $2^k$ such distinct labels, this means that $m \leq 2^k$. Therefore for $m > 2^k$ no such coloring exists. One explicit example of the above coloring when $m = 2^k$ is as follows: Take a two edge-colored $K_4$ such that each monochromatic projection is a bi-partite graph, see \Cref{fig:bi-partite}. Now take two copies of this graph and connect the two $K_4$ graphs via a third color. This gives a $K_8$ all of whose monochromatic projections are bi-partite. Following the same idea one can progressively build a $k$ edge-colored $K_{2^k}$, each of whose $k$ monochromatic projections is bi-partite. Following theorem captures this result's consequence for $n \geq 6$ cycle PR boxes.
\begin{theorem}
\label{theorem5}
For $n \geq 2^k + 2$, the joint exclusivity graph of $k$ independent copies of an $n$-cycle PR box does not contain a $K_{2^k + 1}$. That is, $k$ copies do not activate a violation of the E-principle for $n \geq 2^k + 2$-cycle PR boxes.
\end{theorem}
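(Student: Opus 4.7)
The plan is to argue by contradiction, chaining together the Ramsey-theoretic result stated in the paragraphs preceding the theorem with Proposition~\ref{Prop2} and Theorem~\ref{Theorem 2}.

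Suppose, for the sake of contradiction, that the joint exclusivity multigraph $\Gamma^k$ contains a $K_{2^k + 1}$. Restricted to this clique, the coloring inherited from $\Gamma^k$ produces a $k$ edge-colored $K_{2^k + 1}$ (the multiplicity of edges does not matter here since a clique is determined by the presence of at least one edge per pair, and Ramsey-type constraints established for simple edge-colored graphs transfer to the multigraph setting, as noted in \Cref{ssec:Ramsey}). By the result of Ref.~\cite{day2017multicolour} recalled just above, every such $k$ edge-coloring of $K_{2^k+1}$ must contain a monochromatic odd cycle. Hence there exist a color $i \in \{1,\ldots,k\}$ and an odd cycle $C$ of color $i$ whose nodes lie inside this clique, so in particular $|C| \leq 2^k + 1$.

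Next, I would transfer this constraint from $\Gamma^k$ down to the single-copy exclusivity graph $G_i$. Since $C$ is an odd cycle of color $i$ in $\Gamma^k$, it is an odd cycle in the monochromatic projection $H_i(\Gamma^k)$. Proposition~\ref{Prop2} then supplies an odd cycle in $G_i$ of length at most $|C| \leq 2^k + 1$. On the other hand, $G_i$ is the exclusivity graph of an $n$-cycle PR box, so by Theorem~\ref{Theorem 2} its smallest odd cycle has length $n$ if $n$ is odd and $n+1$ if $n$ is even; in either case, this length is at least $n$. The hypothesis $n \geq 2^k + 2$ thus forces every odd cycle of $G_i$ to have length at least $2^k + 2$, strictly greater than $2^k + 1$. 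This contradicts the bound derived from $C$, so no such $K_{2^k + 1}$ can exist in $\Gamma^k$, and no violation of the E-principle can be activated by $k$ copies when $n \geq 2^k + 2$.

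The main obstacle is essentially bookkeeping rather than new combinatorics: the Ramsey input (every $k$ edge-coloring of $K_{2^k+1}$ contains a monochromatic odd cycle) and the odd-girth lower bound on $G_i$ (Theorem~\ref{Theorem 2}) are already in place, while Proposition~\ref{Prop2} provides precisely the length-preserving translation needed between monochromatic projections and individual exclusivity graphs. The only subtle point worth checking carefully is the passage from graph to multigraph Ramsey constraints, which we have already justified in \Cref{ssec:Ramsey}. Once this is in hand, the result follows by a short contradiction argument with no further casework.
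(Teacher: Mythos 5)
Your argument is correct and follows essentially the same route as the paper's proof: both invoke the result of Ref.~\cite{day2017multicolour} that every $k$ edge-coloring of $K_{2^k+1}$ contains a monochromatic odd cycle, push that odd cycle down to a single-copy exclusivity graph via the monochromatic projection and Proposition~\ref{Prop2}, and contradict the odd-girth bound of Theorem~\ref{Theorem 2}. Your version is in fact slightly more explicit than the paper's about the length bookkeeping (bounding the cycle length by $2^k+1$ and comparing with $n \geq 2^k+2$), but the substance is identical.
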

\begin{proof}
The proof follows from a simple result provided in Ref.~\cite{day2017multicolour}. The result states that any time we have a $k$ edge-colored $K_{2^k + 1}$ there then has to be at least one monochromatic graph that is not bi-partite i.e. every $k$ edge coloring of $K_{2^k + 1}$ contains an odd monochromatic cycle. A direct implication of this result is that for $K_{2^k + 1}$ to exist in $\Gamma^k$ at least one monochromatic projection must have an odd cycle. But we know from \Cref{Theorem 2} that the smallest odd cycle for $n \geq 2^k + 2$ is of size $2^k + 3$ which is bigger than ${2^k + 1}$ the order of clique that is necessary for violation. Hence, no $n \geq 2^k + 2 $ PR box can violate the E-principle for $k \geq 4$ copies. This finishes the proof.
\end{proof}
\section{Discussion}
\label{sec:discussion}

We conclude with a quick summary of the methodology used 
and a discussion of possible avenues for future research along similar lines.

\subsection{Conclusion}
\label{ssec:conclusion}

In this work we refined the conventional method for probing activation effects of the exclusivity principle. We then applied this refinement to explore activation effects for the $n$-cycle PR boxes, generating a wealth of results highlighted in \Cref{sec:result_highlights}.

Our refinement amounts to preserving in the joint exclusivity structure the information as to which components determine the exclusivity between each pair of joint events,
something not taken into account in the conventional approach.
We do so by building a joint exclusivity structure consisting of an edge-colored multigraph, where we assign a unique color to each component and use it to mark the edges (denoting exclusivity between joint events) determined by the exclusivity graph of that component (namely, by the exclusivity between the individual events at that component),
as explained in \Cref{ssec:multi_color_or_product}.

The problem of exhibiting activation of the E-principle then maps to finding specific cliques within this joint edge-colored exclusivity multigraph.
This is where Ramsey theory comes into play, as it necessitates the presence of certain monochromatic structures inside edge-colored cliques, which in turn translates to constraints on the exclusivity graphs of the individual components.
If these constraints are impossible to satisfy, then no clique of the required size exists in the joint exclusivity multigraph, ensuring that the E-principle is not violated in the given situation.
On the other hand, however, satisfying the constraints from a given Ramsey result does not guarantee the existence of the corresponding clique.
As such, Ramsey theory plays an asymmetric role in studying activation effects of the E-principle.

Luckily, in our investigation of $n$-cycle PR boxes, we found relevant results in the Ramsey theory literature
that let us draw conclusions about activation effects of the E-principle, both excluding violations of the E-principle with certain numbers of copies and bounding the amount of existing violations (as in the case $n =4,5$ with $k=2,3$).

\subsection{Outlook}
\label{ssec:outlook}

One may wonder whether these or other Ramsey-theoretic results can be used to explore E-principle activation effects for correlations on more general measurement scenarios.
An advantage of considering $n$-cycle PR boxes is that all possible events occur with the same probability.
Because of this, the problem of finding violations of the E-principle with a given number of copies maps to finding cliques of a certain size, with no consideration given to the weights of the nodes (\ie the probabilities of the corresponding joint events) due to their uniformity. 
The same property does not hold in general for (extremal boxes of) arbitrary scenarios, wherein witnessing violations also depends on the weights of the nodes.
One could still, in principle, use Ramsey-flavoured results for vertex-weighted graphs.
The main obstacle is that finding Ramsey numbers is very hard:
after 100 years of research, there are only partial results, mostly consisting of upper or lower bounds for Ramsey numbers, going only up to four colors \cite{radziszowski2012small}.
It would be interesting to investigate how these known results can be utilized in exploring violations of the E-principle in more general scenarios, as well as other questions in physics that can be reformulated as clique-finding problems.

Another natural question, complementary to those analysed in this work, is whether for every $n$-cycle PR box there exists a number of copies $k$ that activates a violation of the E-principle?
We conjecture that the answer to this question is yes.
We have already seen that two copies suffice when $n=4,5$, but not when $n \geq 6$.
In fact, we further showed that even three copies are not enough to activate a violation when $n \geq 6$. 

Let us provide some evidence in support of the conjecture.
Cabello~\cite[Lemma 4]{cabello2019quantum} showed that for a given exclusivity graph, its theta body $TH(G)$ coincides with the largest set $\mathcal{P}$ of probability assignments (vertex-weightings) such that:
(i) the independent product of any two assignments in the set (\ie $p \otimes p'$ for $p, p' \in \mathcal{P}$) satisfies the E-principle, and 
(ii) the independent product of any number of copies of an assignment in the set (\ie $p^{\otimes k}$ for $p \in \mathcal{P}$) satisfies the E-principle
\footnote{Note that the the assignment $p \otimes p'$ is a probability assignment for the OR-product of the graph $G$ with itself, while $p^{\otimes k}$ is a probability assignment for the OR-product of $k$ copies of $G$.}. 
Note that $TH(G)$ collects the probability assignments arising from all the quantum correlations consistent with that graph, not just from a fixed scenario; that is, it contains quantum correlations coming from all possible scenarios that produce events realizing the given exclusivity graph.
Using the Lovász number characterization of $M_{2n}$ and $Y_n$ from Refs.~\cite{araujo2013all,bharti2022graph},
one can easily establish that any $n$-cycle PR box lies outside the theta body of its exclusivity graph.
However, this is not quite enough to conclusively answer our question:
in light of the above-mentioned result, it is conceivable that a correlation outside the theta body satisfies the E-principle with any number of copies
but sidesteps the result due to condition (i), by violating the E-principle when considered jointly with another correlation from $TH(G)$.
This seems unlikely, especially given the highly symmetric nature of $n$-cycle scenarios.
Still, whether this is possible is left as an open problem for now.

A potentially simple start to answering the general question posed above would be to focus on the case $n = 6$ and try to construct a violation-producing clique for some number of copies $k \geq 4$.
A crude, back-of-the-envelope attempt to find a $K_{17}$ for $k = 4$ and $n = 6$ along the lines of how a $K_5$ is constructed when $k = 2$ and $n = 4, 5$ fails.
The ten edges of $K_5$, as shown in \Cref{fig:K_5_with_n_45}, are divided into two disjoint subsets such that
the edges in each subset are determined by one of the copies of the $4$- or $5$-cycle PR box, with no edge from any of the copies being used more than once (\ie no edge from a single copy determines more than one edge of $\Gamma^2$ that appears in the clique $K_5$).
A similar division of the edge set of $K_{17}$ into four disjoint subsets, in such a way that each copy handles the edges in one of these subsets without repeating the use of any of its edges, is impossible.
$K_{17}$ has $136$ edges; dividing them into four subsets would result in at least one copy being assigned $34$ edges.
However, the exclusivity graph of an $n$-cycle PR box has a total of $3n$ edges, which for $n = 6$ amounts to $18$ edges, far fewer than the required $34$.
This means that, in any potential realization of $K_{17}$ within $\Gamma^4$, edges from at least one of the copies would have to be reused multiple times to determine edges of $\Gamma^4$ (exclusivity between joint events) lying in the clique $K_{17}$.
And the problem only gets worse as the number of copies $k$ increases: for a fixed $n$, the size of a violation-producing clique $K_{2^k + 1}$ increases exponentially with $k$, with its total number of edges 
being $2^{2k-1} + 2^{k-1}$, while the number of edges of the exclusivity graph of an individual copy remains fixed as $3n$, so that the number of edges that $k$ copies could cover without repetitions grows linearly, as $3nk$.
Therefore, finding a violation-producing clique would require a much more intricate construction than the case of $K_5$, underscoring the complexity of this problem.
This further emphasizes the power and potential of our established connection to -- and exploitation of -- Ramsey theory.

We conclude hoping that the unexpected bridge established between Ramsey theory and quantum foundations encourages fruitful collaboration and two-way interaction between mathematicians and physicists in these fields.

\begin{acknowledgments}
We thank Som Kanjilal, Ad\'an Cabello, and Rafael Wagner for valuable comments and discussions. The work was supported by the Digital Horizon Europe project \href{https://doi.org/10.3030/101070558}{FoQaCiA}, \textit{Foundations of Quantum Computational Advantage}, GA no. 101070558.
The authors also acknowledge financial support from FCT -- Funda\c{c}\~ao para a Ci\^encia e a Tecnologia (Portugal) through PhD Grant SFRH/BD/151452/2021 (R.C.) and through CEECINST/00062/2018 (R.S.B.).
\end{acknowledgments}


\bibliography{references}


\end{document}